
\documentclass[letterpaper, 10 pt, conference]{ieeeconf}  

\usepackage{graphicx}
\usepackage{amsmath}
\usepackage{mathtools}
\usepackage{amssymb}
\usepackage{amsfonts}
\usepackage{upgreek}
\usepackage[hidelinks]{hyperref}
\usepackage{cite}
\usepackage{algorithm}
\usepackage{algorithmic}
\usepackage{mathdots}
\usepackage{bm}
\usepackage{textcomp}
\usepackage{cases}
\usepackage[font=footnotesize]{caption}
\usepackage[font=footnotesize]{subcaption}
\usepackage{color}
\usepackage{soul}
\usepackage[normalem]{ulem}
\usepackage{graphbox}
\usepackage{wasysym}
\usepackage{xcolor}


\newtheorem{lemma}{Lemma}
\newtheorem{remark}{Remark}
\newtheorem{theorem}{Theorem}
\newtheorem{assumption}{Assumption}
\newtheorem{property}{Property}

\newcommand{\R}{\mathbb{R}}
\newcommand{\rsl}{\underline{\rho}^s}
\newcommand{\rsu}{\bar{\rho}^s}
\newcommand{\rhl}{\underline{\rho}^h}
\newcommand{\rhu}{\bar{\rho}^h}
\newcommand{\drsl}{\dot{\underline{\rho}}^s}
\newcommand{\drsu}{\dot{\bar{\rho}}^s}
\newcommand{\drhl}{\dot{\underline{\rho}}^h}
\newcommand{\drhu}{\dot{\bar{\rho}}^h}
\newcommand{\ru}{\rho^U}
\newcommand{\rl}{\rho^L}
\newcommand{\rU}{\rho^U}
\newcommand{\rL}{\rho^L}
\newcommand{\pl}{\varphi^L}
\newcommand{\pu}{\varphi^U}
\newcommand{\dpl}{\dot{\varphi}^L}
\newcommand{\dpu}{\dot{\varphi}^U}
\newcommand{\drl}{\dot{\rho}^L}
\newcommand{\dru}{\dot{\rho}^U}
\newcommand{\sign}{\mathrm{sign}}
\newcommand{\etl}{\eta^L}
\newcommand{\etu}{\eta^U}
\newcommand{\detl}{\dot{\eta}^L}

\newcommand{\col}{\mathrm{col}}
\newcommand{\diag}{\mathrm{diag}}
\newcommand{\xh}{\hat{x}}
\newcommand{\evh}{\hat{e}_v}
\newcommand{\evih}{\hat{e}_{v_i}}
\newcommand{\Linf}{\mathcal{L}_{\infty}}
\newcommand{\taum}{\tau_{\mathrm{max}}}
\newcommand{\gamv}{\gamma^v}
\newcommand{\epix}{\varepsilon_x}
\newcommand{\epiv}{\varepsilon_v}
\newcommand{\epixi}{\varepsilon_{x_i}}
\newcommand{\epivi}{\varepsilon_{v_i}}
\newcommand{\epixbar}{\bar{\varepsilon}_x}
\newcommand{\epivbar}{\bar{\varepsilon}_v}

\def\@IEEEtablestring{table}

\long\def\@makecaption#1#2{%
	\ifx\@captype\@IEEEtablestring%
	\begin{center}{\footnotesize #1}\\{\footnotesize\scshape #2}\end{center}%
	\@IEEEtablecaptionsepspace
	\else
	\@IEEEfigurecaptionsepspace
	\setbox\@tempboxa\hbox{\footnotesize #1.~~ #2}%
	\ifdim \wd\@tempboxa >\hsize%
	\setbox\@tempboxa\hbox{\footnotesize #1.~~ }%
	\parbox[t]{\hsize}{\footnotesize \noindent\unhbox\@tempboxa#2}%
	\else%
	\ifcenterfigcaptions \hbox to\hsize{\footnotesize\hfil\box\@tempboxa\hfil}%
	\else \hbox to\hsize{\footnotesize\box\@tempboxa\hfil}%
	\fi\fi\fi}

\IEEEoverridecommandlockouts                              
\overrideIEEEmargins



\title{\LARGE \bf
	Funnel Control Under Hard and Soft Output Constraints \\(extended version)
}

\author{Farhad Mehdifar, Charalampos P. Bechlioulis and Dimos V. Dimarogonas
\thanks{This work is supported by ERC CoG LEAFHOUND, H2020-ICT project CANOPIES, the KAW foundation, and the Swedish Research Council (VR).}
\thanks{F. Mehdifar and D. V. Dimarogonas are with the Division of Decision and Control Systems, KTH Royal Institute of Technology, Stockholm, Sweden.   {\tt\small mehdifar@kth.se; dimos@kth.se}}%
\thanks{C. P. Bechlioulis is with the Division of Systems and Control of the Department of Electrical and Computer Engineering at University of Patras, Patra, Greece. {\tt\small chmpechl@upatras.gr}}%
}

\begin{document}

\maketitle
\thispagestyle{empty}
\pagestyle{empty}

\begin{abstract}
This paper proposes a funnel control method under time-varying hard and soft output constraints. First, an online funnel planning scheme is designed that generates a constraint consistent funnel, which always respects hard (safety) constraints, and soft (performance) constraints are met only when they are not conflicting with the hard constraints. Next, the prescribed performance control method is employed for designing a robust low-complexity funnel-based controller for uncertain nonlinear Euler-Lagrangian systems such that the outputs always remain within the planned constraint consistent funnels. Finally, the results are verified with a simulation example of a mobile robot tracking a moving object while staying in a box-constrained safe space.
\end{abstract}

\section{Introduction}

During the past decades, reference/trajectory tracking, as well as stabilization of complex and uncertain nonlinear dynamical systems, has attracted considerable research effort. Constraints are ubiquitous in controller design of practical nonlinear systems and they mainly emerge as performance and safety specifications. Constraint violation may result in performance degradation, system damage and hazards, therefore, owing to practical needs and theoretical challenges, the rigorous handling of constraints in the control design of nonlinear systems has become a dominant research topic during the past decade. Common existing methods in dealing with different types of constraints include model predictive control\cite{mayne2014model}, reference governors \cite{garone2017reference}, set invariance based approaches such as control barrier functions \cite{ames2016control}, barrier lyapunov functions \cite{tee2009barrier}, funnel control \cite{ilchmann2002tracking}, and prescribed performance control \cite{bechlioulis2008robust}.

Funnel-based control methods offer low-complexity and robust (model-free) control designs for handling (time-varying) output constraints for uncertain nonlinear systems. During the past years, funnel-based control designs were particularly utilized to ensure a user-defined transient and steady-state performance on output tracking/stabilization errors through confining the evolution of the output error signals within predefined time-varying funnels. Two main control approaches that have been proposed to handle the aforementioned objective are Funnel Control (FC) \cite{ilchmann2007tracking, lee2019asymptotic, berger2021funnel} and Prescribed Performance Control (PPC) \cite{bechlioulis2010prescribed,bechlioulis2014low,theodorakopoulos2015low}. Funnel control builds on the adaptive high-gain control methodology, where a time-varying and state-dependent function is replaced by the monotonically increasing control gain. In PPC,  initially, a transformation that incorporates the desired performance specifications is defined. Then, an appropriate control action is designed that establishes the uniform boundedness of the transformed system and gives the necessary and sufficient conditions for the satisfaction of the predefined performance (funnel) constraint. PPC was first presented in \cite{bechlioulis2008robust,bechlioulis2010prescribed} within a robust adaptive control framework for systems having known high relative degree and later was further extended to the approximation-free paradigm in \cite{bechlioulis2014low,theodorakopoulos2015low}, significantly reducing the complexity of the designed controller. Besides FC and PPC, Time-Varying Barrier Lyapunov Functions (TVBLFs) have been also employed to deal with similar problems, e.g., \cite{tee2011control}.

Despite their successful application for controlling output constrained systems, FC, PPC, and TVBLFs have been mainly focused on satisfying output performance constraints with specific cases of safety specifications (i.e., usually as constant upper and lower bounds on the tracking errors) provided that safety specifications are consistent (compatible) with the performance requirements. However, in general, performance constraints on tracking/stabilization may not be always in agreement with safety specifications. Recently, Control Barrier Functions (CBFs) have been introduced, for handling tracking/stabilization in the presence of safety specifications through the application of Quadratic Programs (QPs), in which safety specifications are considered as hard constraints and stabilization/tracking requirements are considered as a soft constraint\footnote{Note that under this scheme, the quality of the tracking/stabilization task is not predetermined (constrained), unlike the funnel-based control methods.}in the optimization problem \cite{ames2016control, ames2019control}. While typical CBFs are confined to constant constraints, recently \cite{xu2018constrained} presented control synthesis using time-varying CBFs to deal with time-varying hard output constraints. Nevertheless, CBF-based control synthesis requires exact knowledge of the system dynamics.

In the present paper, we propose a novel funnel-based control scheme capable of handling time-varying soft and hard (funnel-like) output constraints. First, we provide a novel online funnel planning scheme that constructs a Constraint Consistent Funnel (CCF) for each output of the system. Hard output constraints are always respected and soft output constraints are met only when they are not conflicting with the hard constraints. Then, a model-free robust controller is designed under a low-complexity PPC scheme to keep each system's output within its corresponding (online) planned CCF. The controller design is provided for uncertain nonlinear Euler-Lagrangian systems, which constitute a large class of practical physical systems (mobile robotic vehicles, robot manipulators, etc.). To the best of the authors' knowledge, this is the first work that considers hard and soft constraints under funnel-based control designs. Moreover, concerning the existing time-varying CBF-based control synthesis methods our work provides model-free and more computationally tractable control laws (i.e., optimization-free) to deal with time-varying hard/soft output constraints.

\section{Problem Formulation} \label{sec:prob_formu}

Consider the following Euler-Lagrange (EL) system:
\begin{equation} \label{eq:sys_dynamics}
	\begin{cases}
		M(x)\dot{v} + C(x,v) v+ g(x) + D(x)v = u + d(t), \\
		y = x,
	\end{cases}
\end{equation}
where $x \coloneqq \col(x_i) \coloneqq [x_1, x_2, \ldots, x_n]^\top \in \R^{n}$ and $v \coloneqq \col(v_i) = \dot{x}$ are the generalized coordinates and their velocities, respectively, $M(\mathord{\cdot}): \R^n \rightarrow \R^{n \times n}$ is the inertia matrix, $C(\mathord{\cdot}): \R^n \times \R^n \rightarrow \R^{n \times n}$ is the centrifugal and Coriolis forces matrix, $g(\mathord{\cdot}): \R^n \rightarrow \R^n$ is the vector of gravitational forces, $D(\mathord{\cdot}): \R^n \rightarrow \R^{n \times n}$ is the matrix of friction-like terms, $d(\mathord{\cdot}): \R \rightarrow \R^n$ is the vector of unknown bounded piecewise continuous external disturbances, $u \in \R^n$ and $y \in \R^n$ denote the control input and the output, respectively. 
\begin{assumption} \label{assump:unknown}
	 $M(x), C(x,v), g(x), D(x)$, and the upper bound of $d(t)$ are unknown for the controller design.
\end{assumption}

The following property holds for the EL system \eqref{eq:sys_dynamics}:

\begin{property}\label{prop:conti}
	 $M(\mathord{\cdot})$ is symmetric and positive definite. Moreover, $M(x), g(x), D(x)$ are continuous over $x$ and $C(x,v)$ is continuous over $x$ and $v$.
\end{property}

Suppose that the outputs of \eqref{eq:sys_dynamics} are required to satisfy the following time-varying constraints:
\begin{align}
	\rhl_i(t)< x_i(t) < &\rhu_i(t), \quad i = \{1,\ldots, n\} \label{hard_const}, \\
	\rsl_i(t)< x_i(t) < &\rsu_i(t), \quad i = \{1,\ldots, n\} \label{soft_const},
\end{align}
where $\rhl_i(\mathord{\cdot}), \rhu_i(\mathord{\cdot}), \rsl_i(\mathord{\cdot}), \rsu_i(\mathord{\cdot}): \R \rightarrow \R,  i = \{1, \ldots, n\}$ are bounded continuously differentiable functions of time with bounded derivatives. Let $\rhl_i(t), \rhu_i(t)$ and $\rsl_i(t), \rsu_i(t)$ represent hard and soft constraints on $x_i(t)$, respectively.

\begin{assumption}[Feasibility of hard/soft constraints] \label{assum_soft_hard_bounds} 
Let	$\rhu_i(t) - \rhl_i(t) \geq \epsilon^h_i > 0$ and $\rsu_i(t) - \rsl_i(t) \geq \epsilon^s_i > 0$, $\forall t \geq 0, i = \{1, \ldots, n\}$.
\end{assumption} 

Note that under Assumption \ref{assum_soft_hard_bounds}, inequalities \eqref{hard_const} and \eqref{soft_const} denote separate hard and soft constrained feasible time-varying funnels for $x_i(t)$. Hard and soft constraints on $x_i(t)$  are said to be \textit{compatible} whenever both \eqref{hard_const} and \eqref{soft_const} can be satisfied at the same time (see Fig.\ref{fig:compa}). In other words, \eqref{hard_const} and \eqref{soft_const} on $x_i(t)$ are compatible at time $t$ if both $\rhu_i(t) > \rsl_i(t)$ and $\rsu_i(t) > \rhl_i(t)$ hold. 

For ease of presentation, we introduce the following assumption, which can be further relaxed (see the Appendix for the relaxed formulation).
\begin{assumption} \label{assum: ini_compat}
	 The hard \eqref{hard_const} and soft \eqref{soft_const} constraints are compatible at $t=0$ and the initial conditions, $x_i(0), i = \{1,\ldots, n\}$ satisfy both \eqref{hard_const} and \eqref{soft_const} at $t = 0$.
\end{assumption}

\textbf{Problem (Hard and soft constrained funnel control):} Given the aforementioned hard and soft output constraints in \eqref{hard_const} and \eqref{soft_const}, design under Assumptions 1-3:
 \begin{enumerate}
 	\item a continuous time-varying \textit{Constraint Consistent Funnel} (CCF) with boundary functions $\ru_i(t), \rl_i(t): \R_{\geq 0} \rightarrow \R$ for each output $x_i(t),i = \{1,\ldots, n\}$, where $\ru_i(t) - \rl_i(t) \geq \epsilon^c_i > 0, \forall t \geq 0$ (funnel feasibility condition);
 	\item a robust model-free control law $u(t,x)$ to ensure:
 \end{enumerate}
 	\begin{equation} \label{const_consist_funnel}
	\rl_i(t)< x_i(t) < \ru_i(t), \;\; \forall t\geq 0, \;\; i = \{1,\ldots, n\},\!\!\!\!\!\! 
	\vspace{-0.1cm}
\end{equation}
where constraint consistency of $\ru_i(t), \rl_i(t)$ means: (i) satisfaction of \eqref{const_consist_funnel} always implies satisfaction of \eqref{hard_const}, i.e., \eqref{const_consist_funnel} always respects the hard constraints \eqref{hard_const}, and (ii) whenever hard and soft constraints \eqref{hard_const} and \eqref{soft_const} are compatible, \eqref{const_consist_funnel} ensures satisfaction of \eqref{soft_const} (or \textit{exponentially fast} recovery of \eqref{soft_const}, which will be explained in detail later), i.e., \eqref{const_consist_funnel} respects (recovers) the soft constraints \eqref{soft_const} only when its satisfaction is not conflicting with the hard constraints \eqref{hard_const}. 

Fig.\ref{fig:const_consist_Reg} shows examples of \textit{Constraint Consistent Regions} (CCRs) for $x_i(t)$, in which hard constraints are always satisfied and the soft constraints are met only when they are compatible with the hard constraints. As can be seen in Fig.\ref{fig:compa}, if hard and soft constraints are compatible for all $t\geq0$, then the boundaries of the CCR can determine the boundary functions of a CCF in \eqref{const_consist_funnel}. However, this is not the case if hard and soft constraints become incompatible for a time interval since the upper or the lower boundary of the CCR becomes discontinuous and thus cannot be used to construct a feasible (well-defined) continuous CCF. For example, in Fig.\ref{fig:incompa} the upper bound of the CCR is discontinuous. Hence, to have a continuous transition region for the evolution of $x_i(t)$, $\ru_i(t)$ in \eqref{const_consist_funnel} needs to be designed (planned), as depicted in Fig.\ref{fig:incompa} (dashed curve), while a continuous $\rl_i(t)$ can be directly determined by the lower boundary of the CCR.

\begin{remark} \label{rem:app_examp}
	In practical applications, soft constraints \eqref{soft_const} can be considered as the required performance for reference tracking or stabilization, while hard constraints \eqref{hard_const} can be considered as safety requirements. For example, consider a mobile robot in the plane whose motion is modeled by \eqref{eq:sys_dynamics} and it requires to: (i) always remain in a box-shaped region indicated by $|x_i(t)| < s_{i}, i = \left\lbrace 1,2\right\rbrace $ for safety considerations, and (ii) track a desired time-varying continuously differentiable reference trajectory $x_d(t) = [x_{d_1}(t), x_{d_2}(t)]^\top$, such that $|x_i(t)- x_{d_i}(t)| < \gamma_i(t), i = \left\lbrace 1,2\right\rbrace $, where $\gamma_i(t)$ indicate user-defined positive (performance) functions decaying to a sufficiently small neighborhood of zero, e.g., $\gamma_i(t) = (\rho_{0_i} - \rho_{\infty_i}) \exp(-l_i t) + \rho_{\infty_i}$, in which $l_i, \rho_{\infty_i}>0$ determine the convergence rate and ultimate bound of the tracking errors, respectively, and $\rho_{0_i} > |x_i(0)- x_{d_i}(0)|, i = \left\lbrace 1,2\right\rbrace $. Note that, in general, the desired trajectory $x_d(t)$ may not always be within the safe region. In this application example, \eqref{hard_const} and \eqref{soft_const} will become $- s_i < x_i(t) < s_i$ and $ x_{d_i}(t) - \gamma_i(t)  < x_i(t) < x_{d_i}(t) + \gamma_i(t),i = \left\lbrace 1,2\right\rbrace $, respectively.
\end{remark} 
\begin{figure}[tbp]
		\flushleft
	\begin{subfigure}[t]{0.50\linewidth}
		\flushleft
		\includegraphics[width=\linewidth]{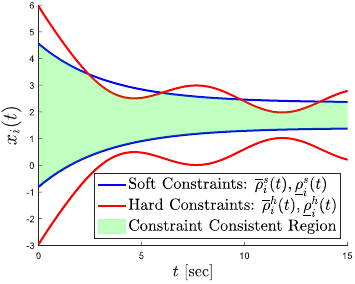}
		\caption{}
		\label{fig:compa}
	\end{subfigure}%
	~
	\begin{subfigure}[t]{0.5\linewidth}
		\flushleft
		\includegraphics[width=\linewidth]{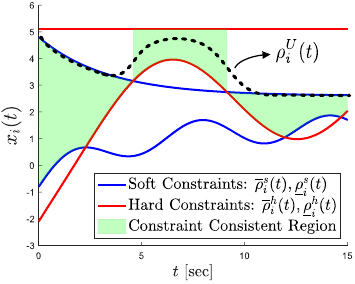}
		\caption{}
		\label{fig:incompa}
	\end{subfigure}
	\caption{(a) compatible ($\forall t \geq 0$), and (b) incompatible hard and soft constraints.\vspace{-0.4cm}}
	\label{fig:const_consist_Reg}
\end{figure}

\section{Main Results}

In this section, given the hard and soft constraints in \eqref{hard_const} and \eqref{soft_const}, we will first propose an online funnel planning method to construct the constraint consistent feasible funnel boundary functions in \eqref{const_consist_funnel}. Then, we will design a robust model-free funnel-based control law using the prescribed performance control method to ensure \eqref{const_consist_funnel}.

\subsection{Online Constraint Consistent Funnel Planning} \label{sec:funnel_plan}
Consider the hard and soft constraints in \eqref{hard_const} and \eqref{soft_const}. Note that whenever hard and soft constraints are compatible one can simply choose $\rl_i(t) = \max_t \{\rsl_i(t), \rhl_i(t)\}$ and $\ru_i(t) = \min_t \{\rsu_i(t), \rhu_i(t)\}, i = \{1,\ldots, n\}$ as proper candidates for CCF boundary functions within which $x_i(t)$ is allowed to evolve\footnote{By the notation $\max_t$ (resp. $\min_t$) we denote taking $\max$ (resp. $\min$) of their (time-varying) arguments with respect to time $t$.}(see Fig.\ref{fig:compa}). However, whenever hard and soft constraints become incompatible for some time interval (see Fig.\ref{fig:incompa}) the above choice leads to $\min_t \{\rsu_i(t), \rhu_i(t)\} \leq \max_t \{\rsl_i(t), \rhl_i(t)\}$, which gives an infeasible funnel in \eqref{const_consist_funnel}. In this respect, for each $x_i(t), i = \{1,\ldots, n\}$, we design $\rl_i(t)$ and $\ru_i(t)$ in \eqref{const_consist_funnel} as follows:
\begin{subequations} \label{non_smooth_funn_modif}
	\begin{numcases}{}
		\rl_i(t) \coloneqq \max_t \{\rsl_i(t) - \pl_i(t), \rhl_i(t)\}, \label{eq:funn_low_bound} \\
		\ru_i(t) \coloneqq \min_t \{\rsu_i(t) + \pu_i(t), \rhu_i(t)\}, \label{eq:funn_up_bound}
	\end{numcases}
\end{subequations}
where $\pl_i(t), \pu_i(t): \R_{\geq 0} \rightarrow \R_{\geq 0}, i = \{1,\ldots, n\}$ are continuous nonnegative \emph{modification signals} that are governed by the following dynamics:
\begin{subequations} \label{eq:modif_signals_dyn}
	\begin{numcases}{}
		\dpl_i =  \dfrac{1}{2} \big( 1 - \sign(\etl_i - \mu)\big) \dfrac{1}{\etl_i + \pl_i} - k_c\, \pl_i, \label{eq:modif_signals_low} \\
		\dpu_i =  \dfrac{1}{2} \big( 1 - \sign(\etu_i - \mu)\big) \dfrac{1}{\etu_i + \pu_i}  - k_c \, \pu_i, \label{eq:modif_signals_up}
	\end{numcases}
\end{subequations}
with $\pl_i(0) =  \pu_i(0) = 0$, in which $\etl_i(t) = \rhu_i(t) - \rsl_i(t)$, $\etu_i(t) = \rsu_i(t) - \rhl_i(t), i = \{1,\ldots, n\}$, and $\mu, k_c > 0$ are user-defined arbitrary positive constants. The modification signals $\pl_i(t), \pu_i(t)$ in \eqref{non_smooth_funn_modif}, adjust $\rl_i(t)$ and $\ru_i(t)$ whenever hard and soft constraints on $x_i(t)$ become conflicting so that the soft constraints \eqref{soft_const} are violated in favor of satisfying the hard constraints \eqref{hard_const}. 

In the sequel, we summarize the philosophy behind adopting \eqref{non_smooth_funn_modif} and \eqref{eq:modif_signals_dyn}, as well as the impact of  choosing $\mu$ and $k_c$. Recall that when hard and soft constraints are compatible and $\pl_i(t) = \pu_i(t) = 0$, \eqref{non_smooth_funn_modif} determines continuous boundary functions of the CCF \eqref{const_consist_funnel}. Therefore, due to Assumption \ref{assum: ini_compat} and  $\pl_i(0) = \pu_i(0) = 0$, \eqref{non_smooth_funn_modif} gives a feasible CCF for $x_i(t)$ at $t=0$. Note that, the hard and soft constraints \eqref{hard_const} and \eqref{soft_const} become conflicting (incompatible) when there exists $t>0$ such that: (i) $\rhu_i(t) - \rsl_i(t) < 0$ or (ii) $\rsu_i(t) - \rhl_i(t) < 0$, which lead to discontinuous CCR boundaries as mentioned in Section \ref{sec:prob_formu} (see Fig.\ref{fig:incompa}). Now consider case (ii) and let $\rsu_i(t) < \rhu_i(t), \forall t \geq 0$ (as it is illustrated in Fig.\ref{fig:incompa}). Based on a user-defined minimal distance $\mu>0$ between $\rhl_i(t)$ and $\rsu_i(t)$, the idea is to design a triggering process, by which $\ru_i(t)$ in \eqref{eq:funn_up_bound} starts disregarding the soft constraint $\rsu_i(t)$, thus allowing the output $x_i(t)$ to enter the region $\rsu_i(t)  < x_i(t) < \rsu_i(t) + \pu_i(t)$. In this respect, as soon as $\rsu_i(t) - \rhl_i(t) \leq \mu$, the term ${1}/(\etu_i + \pu_i) = {1}/((\rsu_i + \pu_i) -\rhl_i)$ in \eqref{eq:modif_signals_up} becomes active and sufficiently increases $\pu_i(t)$. Note that the magnitude of $\mu$ determines the level of conservatism for triggering the process of disregarding the soft constraints when hard and soft constraints tend to become conflicting. On the other hand, whenever the conflict between hard and soft constraints is resolved (i.e., $\rsu_i(t) - \rhl_i(t) > \mu$), \eqref{eq:modif_signals_up} reduces to $\dpu_i = -k_c \pu_i$ and ensures exponential convergence of $\pu_i(t)$ to zero. Owing to \eqref{eq:modif_signals_up}, this allows $\ru_i(t)$ to converge exponentially towards $\rsu_i(t)$ (i.e., the violated soft constraint gets \textit{recovered exponentially fast}). In this case, the rate of convergence of  $\ru_i(t)$ to the soft constraint $\rsu_i(t)$ can be adjusted by tuning $k_c$. Moreover, a larger $k_c$ can impede the growth of $\pu_i(t)$ (also $\pl_i(t)$), leading to a less conservative violation of the soft constraints. Finally, notice that if at some time $\pu_i(t)$ increases such that $\rsu_i(t) + \pu_i(t) >  \rhu_i(t)$, according to \eqref{eq:funn_up_bound}, $\ru_i(t)$ will become equal to $\rhu_i(t)$ to respect the hard constraint and due to Assumption \ref{assum_soft_hard_bounds} the CCF's boundaries $\rl_i(t) = \rhl_i(t)$ and $\ru_i(t) = \rhu_i(t)$ will remain feasible. In a similar fashion, one can justify the modification of $\rl_i(t)$ in \eqref{eq:funn_low_bound}. Finally, we emphasize that $\rl_i(t)$ and $\ru_i(t), i = \{1,\ldots, n\}$ obtained from \eqref{non_smooth_funn_modif} are continuous (but in general nonsmooth) functions of time. Moreover, from \eqref{eq:modif_signals_dyn}, $\dpl_i(t)$ and $\dpu_i(t)$ are piecewise continuous functions of time.

Before presenting the following lemma, we emphasize that in this work each potentially conflicting pair of hard and soft constraint bounds, namely, $\rhu_i(t)$ with $\rsl_i(t)$ and $\rhl_i(t)$ with $\rsu_i(t)$, is assumed to evolve in a well-posed manner. Specifically, there exists a dwell time between any two consecutive crossings of these bounds. Equivalently, for each $i \in \{1,\ldots,n\}$, the gap functions $\etl_i(t)=\rhu_i(t)-\rsl_i(t)$ and $\etu_i(t)=\rsu_i(t)-\rhl_i(t)$ may change sign only after a nonzero dwell time; in particular, $\etl_i(t)$ and $\etu_i(t)$ cannot switch sign infinitely often. This mild regularity property is typically satisfied in practical applications.

\begin{lemma} \label{lem:one}
	Under Assumptions \ref{assum_soft_hard_bounds} and \ref{assum: ini_compat}, equations \eqref{non_smooth_funn_modif} and \eqref{eq:modif_signals_dyn} construct
	$\rl_i(t)$ and $\ru_i(t)$ such that: (i) $\dot{\rho}^L_i, \dot{\rho}^U_i, \rl_i, \ru_i \in \mathcal{L}_\infty$ (are bounded signals), and (ii) $\ru_i(t) - \rl_i(t) \geq \epsilon^c_i > 0, \forall t\geq0, i = \{1,\ldots, n\}$.
\end{lemma}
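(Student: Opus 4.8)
The plan is to reduce both assertions of Lemma~\ref{lem:one} to a handful of properties of the modification signals $\pl_i,\pu_i$ generated by \eqref{eq:modif_signals_dyn}: that they are defined on all of $[0,\infty)$, are nonnegative, are bounded with bounded (piecewise continuous) derivatives, and --- the crucial point --- that the denominators $\etl_i+\pl_i$ and $\etu_i+\pu_i$ appearing in \eqref{eq:modif_signals_dyn} stay above fixed positive constants for all $t\ge0$. Granting these, claim~(i) follows at once: since $\rsl_i,\rsu_i,\rhl_i,\rhu_i$ are bounded with bounded derivatives, $\rl_i=\max\{\rsl_i-\pl_i,\rhl_i\}$ and $\ru_i=\min\{\rsu_i+\pu_i,\rhu_i\}$ are bounded, and being pointwise $\max/\min$ of uniformly Lipschitz functions they are themselves uniformly Lipschitz, so $\drl_i,\dru_i$ exist almost everywhere and lie in $\Linf$.

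The heart of the argument is a barrier-type estimate, which I carry out for $\zeta^U_i:=\etu_i+\pu_i$ and then invoke symmetrically for $\zeta^L_i:=\etl_i+\pl_i$. On the maximal interval on which $\zeta^U_i>0$ the right-hand side of \eqref{eq:modif_signals_up} is well defined; since its first term is nonnegative, $\dpu_i\ge -k_c\pu_i$, and with $\pu_i(0)=0$ the comparison lemma gives $\pu_i\ge0$ on that interval. Hence $\etu_i\le\zeta^U_i$, so whenever $\zeta^U_i\le\mu$ we also have $\etu_i\le\mu$ and the switching coefficient $\tfrac12\big(1-\sign(\etu_i-\mu)\big)$ is at least $\tfrac12$. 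Combining this with the (assumed) boundedness of $\etu_i$ and $\detu_i$, I obtain, for $\zeta^U_i\le\min\{1,\mu\}$, an estimate of the form $\dot\zeta^U_i=\detu_i+\dpu_i\ge \tfrac{1}{2\zeta^U_i}-K^U$, with $K^U$ a constant built from $\sup_t|\etu_i|$, $\sup_t|\detu_i|$ and $k_c$. Thus $\dot\zeta^U_i>0$ whenever $0<\zeta^U_i<\min\{1,\mu,(2K^U)^{-1}\}$; since Assumption~\ref{assum: ini_compat} yields $\zeta^U_i(0)=\rsu_i(0)-\rhl_i(0)>0$, a standard contradiction argument forces $\zeta^U_i(t)\ge\underline{\zeta}^U_i:=\min\{\zeta^U_i(0),1,\mu,(2K^U)^{-1}\}>0$ for all $t$ in the interval. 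This uniform positive bound also precludes finite-time escape, so the interval is $[0,\infty)$. The identical reasoning gives $\pl_i\ge0$ and $\zeta^L_i(t)\ge\underline{\zeta}^L_i>0$.

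Once $\zeta^U_i\ge\underline{\zeta}^U_i$ is in hand, \eqref{eq:modif_signals_up} gives $\dpu_i\le 1/\underline{\zeta}^U_i-k_c\pu_i$ (the first coefficient is at most $1$), so by comparison $0\le\pu_i(t)\le 1/(k_c\underline{\zeta}^U_i)$ for all $t$, hence $|\dpu_i(t)|\le 2/\underline{\zeta}^U_i$; the same holds for $\pl_i,\dpl_i$. As explained above this closes claim~(i). For claim~(ii) I fix $t\ge0$ and split into the four cases selected by which argument attains the minimum in \eqref{eq:funn_up_bound} and which attains the maximum in \eqref{eq:funn_low_bound}: (a) $\ru_i=\rhu_i,\ \rl_i=\rhl_i$ gives $\ru_i-\rl_i\ge\epsilon^h_i$ by Assumption~\ref{assum_soft_hard_bounds}; (b) $\ru_i=\rhu_i,\ \rl_i=\rsl_i-\pl_i$ gives $\ru_i-\rl_i=\rhu_i-\rsl_i+\pl_i=\zeta^L_i\ge\underline{\zeta}^L_i$; (c) $\ru_i=\rsu_i+\pu_i,\ \rl_i=\rhl_i$ gives $\ru_i-\rl_i=\rsu_i-\rhl_i+\pu_i=\zeta^U_i\ge\underline{\zeta}^U_i$; (d) $\ru_i=\rsu_i+\pu_i,\ \rl_i=\rsl_i-\pl_i$ gives $\ru_i-\rl_i=(\rsu_i-\rsl_i)+\pu_i+\pl_i\ge\epsilon^s_i$, using $\pu_i,\pl_i\ge0$ and Assumption~\ref{assum_soft_hard_bounds}. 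Hence $\ru_i(t)-\rl_i(t)\ge\epsilon^c_i:=\min\{\epsilon^h_i,\epsilon^s_i,\underline{\zeta}^L_i,\underline{\zeta}^U_i\}>0$ for all $t\ge0$.

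I expect the main obstacle to be the barrier estimate of the second paragraph: showing that the growth injected through the $1/(\etu_i+\pu_i)$ term always stays ahead of whatever could drive $\etu_i+\pu_i$ towards zero, while the $-k_c\pu_i$ decay simultaneously keeps $\pu_i$ (hence $\zeta^U_i$) bounded above --- and arguing this cleanly despite the discontinuous $\sign$-switching and without a priori knowledge of the interval of existence. The remaining manipulations (boundedness of the derivatives, the four-case comparison) are routine.
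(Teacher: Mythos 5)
Your proof is correct, and it follows the same overall decomposition as the paper's: (a) establish that the modification signals are nonnegative and bounded and that the denominators $\etl_i+\pl_i$, $\etu_i+\pu_i$ stay uniformly positive; (b) deduce boundedness of $\rl_i,\ru_i,\drl_i,\dru_i$; (c) the four-case lower bound on $\ru_i-\rl_i$. Parts (b) and (c) coincide with the paper's essentially verbatim. The genuine difference is in how you prove the key step (a). The paper argues qualitatively: it shows $\etl_i+\pl_i\nrightarrow 0$ by contradiction (convergence to zero would force $\dpl_i\to+\infty$ and hence $\detl_i$ unbounded), and then tracks nonnegativity of $\pl_i$ across an enumerated sequence of mode-switching times, separately excluding Zeno behavior. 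You instead run a quantitative barrier estimate, $\dot\zeta\ge\tfrac{1}{2\zeta}-K$ on the region $\zeta\le\min\{1,\mu\}$, yielding the explicit lower bound $\min\{\zeta(0),1,\mu,(2K)^{-1}\}$. Your route buys a constructive constant $\epsilon^c_i$ and sidesteps two delicate points in the paper's version: the passage from ``does not converge to zero'' to ``bounded below by a positive constant,'' and the reliance on a discrete switching sequence (a merely continuous $\etl_i$ crossing the level $\mu$ need not generate one); your differential inequality only needs to hold almost everywhere. Your handling of $\drl_i,\dru_i$ via uniform Lipschitzness of the pointwise $\max/\min$ is likewise slightly more careful than the paper's direct assertion that $\drl_i\in\{\drsl_i-\dpl_i,\drhl_i\}$. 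One small ordering nit: finite-time escape is fully excluded only after you also have the upper bound $\pu_i\le 1/(k_c\,\underline{\zeta}^U_i)$, which you derive in the following paragraph; both bounds should be stated on the maximal interval before concluding that it equals $[0,\infty)$.
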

\begin{proof}
	First, we establish $\pl_i, \dpl_i \in \Linf, i \in \{1,\ldots, n\}$. Consider $\dpl_i$ given by \eqref{eq:modif_signals_low}, which operates in two modes:
	
	\textit{\textbf{Mode I.}} When $\etl_i(t) > \mu$, \eqref{eq:modif_signals_low} reduces to $\dpl_i = -k_c \pl_i$, thus $\pl_i(t)$ becomes exponentially stable and $\pl_i, \dpl_i \in \Linf$.
	
	\textit{\textbf{Mode II.}} When $\etl_i(t) \leq \mu$, the first term on the right hand-side of \eqref{eq:modif_signals_low} is active. It holds that if $\etl_i(t) + \pl_i(t) \nrightarrow 0$ (does not converge to zero), then $\dpl_i \in \Linf$. Note that, by assumption we have $\rhu_i, \drhu_i, \rsl_i, \drsl_i \in \Linf$, which leads to $\etl_i, \detl_i \in \Linf$. Now, let $\etl_i(t) + \pl_i(t) \rightarrow 0$, which also indicates $\detl_i + \dpl_i < 0$ or $\detl_i + \dpl_i > 0$ depending on the sign of $\etl_i(t) + \pl_i(t)$ at Mode II's activation time. Owing to \eqref{eq:modif_signals_low}, $\etl_i(t) + \pl_i(t) \rightarrow 0$ leads to $\dpl_i \rightarrow +\infty$, which also requires $\detl_i \rightarrow -\infty$ or $\detl_i \rightarrow +\infty$ , however, this is a contradiction, since we had assumed $\detl_i \in \Linf$. Therefore, $\etl_i(t) + \pl_i(t) \nrightarrow 0$ and $\dpl_i \in \Linf$. Now let $\pl_i \rightarrow +\infty$ (resp. $\pl_i \rightarrow -\infty$), since $\etl_i \in \Linf$, from \eqref{eq:modif_signals_low} we get $\dpl_i(t) \rightarrow - \infty$ (resp. $\dpl_i(t) \rightarrow + \infty$), which contradicts the infinite growth of $\pl_i$, thus $\pl_i \in \Linf$.
	
	Next we prove that if $\pl_i(0) \geq 0$, indeed we will have $\etl_i(t) + \pl_i(t) \geq \epsilon^L_i > 0$ ($\epsilon^L_i$ is a positive constant) and $\pl_i(t)\geq 0, \forall t \geq 0, i \in \{1,\ldots, n\}$. Consider a sequence of switching times between Modes I and II in \eqref{eq:modif_signals_low}: $\{t_1,\ldots, t_j\}, j \in \mathbb{N}$, where $0< t_1 < \ldots < t_j$. Notice that, due to well-posedness of time-varying hard and soft constraint bounds, $\etl_i(t)$ and $\etu_i(t)$ do not change sign infinitely often, therefore, Zeno behavior is excluded as $\etl_i(t)$ and $\etu_i(t)$ in \eqref{eq:modif_signals_dyn} are continuous time-varying signals independent of $\pl_i$ and $\pu_i$. 
	
	Now consider two cases:
	
	\textit{\textbf{Case I.}} Suppose that at $t = 0$ \eqref{eq:modif_signals_low} starts evolving under Mode I (that means $\etl_i(0) > \mu$). In this case, since \eqref{eq:modif_signals_low} in Mode I is an exponentially stable system if $\pl_i(0) \geq 0$, then $\pl_i(t) \geq 0$ for all $t \leq t_1$. As soon as Mode II becomes active at time $t = t_1$, we have $\etl_i(t_1) = \mu > 0$ and $\pl_i(t_1) \geq 0$. Since it is proved that $\etl_i(t) + \pl_i(t) \nrightarrow 0$, for $t_1 \leq t \leq t_2$ and we also have $\etl_i(t_1) + \pl_i(t_1) > 0$, we can infer that there exists a $\epsilon^L_i>0$ such that $\etl_i(t) + \pl_i(t) \geq \epsilon^L_i > 0$, for $t_1 \leq t \leq t_2$.  Moreover, as  $\etl_i(t) + \pl_i(t) > 0$, for $t_1 \leq t \leq t_2$, if $\pl_i(t) = 0$ for $t \in [t_1,t_2]$, $\dpl_i(t)$ in \eqref{eq:modif_signals_low} (operating in Mode II) will remain positive and prevents $\pl_i(t)$ from getting negative, thus $\pl_i(t) \geq 0$, for $t_1 \leq t \leq t_2$. Followed by this, when \eqref{eq:modif_signals_low} switches to Mode I at $t = t_3$, again we will have $\pl_i(t_3) \geq 0$, and $\etl_i(t_3) > \mu > 0$ so the above results can be repeatedly extended for any $t\geq t_3$. Therefore, $\etl_i(t) + \pl_i(t) \geq \epsilon^L_i > 0$ and $\pl_i(t) \geq 0, \forall t\geq0$. 
	
	\textit{\textbf{Case II.}} Now this time suppose that at $t=0$ \eqref{eq:modif_signals_low} starts running under Mode II (that means $\etl_i(0) \leq \mu$). Note that due to Assumption \ref{assum: ini_compat} we have $\etl_i(0) > 0$. Therefore, similarly to Case I, we can prove that  $\etl_i(t) + \pl_i(t) \geq \epsilon^L_i > 0$ and $\pl_i(t) \geq 0, \forall t\geq0$.
	
	In a similar fashion, based on \eqref{eq:modif_signals_up} one can show that: (i) $\etu_i(t) + \pu_i(t) \nrightarrow 0$ and $\pu_i, \dpu_i \in \Linf$, (ii) $\etu_i(t) + \pu_i(t) \geq \epsilon^U_i > 0$, and (iii) the nonnegativity of $\pu_i(t)$ (i.e., $\pu_i(t) \geq 0$), $\forall t\geq0, i \in \{1,\ldots, n\}$. 
	
	Owing to \eqref{non_smooth_funn_modif}, we know that $\drl_i(t) \in  \{\drsl_i(t) - \dpl_i(t), \drhl_i(t)\}$ and $\dru_i(t) \in \{\drsu_i(t) + \dpu_i(t),$ $\drhu_i(t) \}$. Since $\drsl_i, \drhl_i, \drsu_i, \drhu_i \in \mathcal{L}_{\infty}$ and $\dpl_i, \dpu_i \in \Linf$ hold, we get $\drl_i, \dru_i \in  \Linf, \forall t\geq0, i \in \{1,\ldots, n\}$. Moreover, followed by $\rsl_i, \rhl_i, \rsu_i, \rhu_i \in \mathcal{L}_{\infty}$ and boundedness of $\pl_i, \pu_i$, \eqref{non_smooth_funn_modif} establishes $\rL_i, \rU_i$ $\in \Linf, \forall t\geq0, i \in \{1,\ldots, n\}$. 
	
	Finally, from \eqref{non_smooth_funn_modif} we get  $\rU_i(t) - \rL_i(t) \in \{\rhu_i - \rhl_i,\rhu_i- \rsl_i + \pl_i, \rsu_i - \rhl_i + \pu_i , \rsu_i - \rsl_i + \pu_i + \pl_i \}$, where due to Assumption \ref{assum_soft_hard_bounds} and the nonnegativity of $\pu_i(t), \pl_i(t)$, the first and the fourth elements are lower bounded by a positive constant. Moreover, the second and third elements of the above set are always lower bounded by a positive constant, owing to $\etl_i(t) + \pl_i(t) \geq \epsilon^L_i > 0$ and $\etu_i(t) + \pu_i(t) \geq \epsilon^U_i > 0$, respectively. Therefore, there exists a positive constant $\epsilon^c_i \coloneqq \min\{\epsilon^h_i, \epsilon^s_i, \epsilon^L_i, \epsilon^U_i \}$ such that  $\rU_i(t) - \rL_i(t) \geq \epsilon^c_i > 0, \forall t \geq 0, i \in \{1,\ldots, n\}$.
\end{proof}

\begin{remark}[Smooth CCF] \label{rem:smooth_ccFunel}
		The online funnel planning scheme given by \eqref{non_smooth_funn_modif} and \eqref{eq:modif_signals_dyn} provides a continuous but (in general) nonsmooth CCF, which can lead to continuous but nonsmooth control inputs under funnel-based control design methods. There are two sources for nonsmoothness in $\rl_i(t)$, $\ru_i(t)$: (i) the nonsmooth switch in \eqref{eq:modif_signals_dyn} and (ii) the nonsmooth $\max$ and $\min$ operators in \eqref{non_smooth_funn_modif}. To generate smooth $\rl_i(t)$, $\ru_i(t)$ we can use smooth switching mechanisms instead of the nonsmooth ones in \eqref{eq:modif_signals_dyn}. In this respect, $\sign(\etl_i-\mu)$ and $\sign(\etu_i-\mu)$ in \eqref{eq:modif_signals_dyn} can be replaced by $\tanh(\kappa \, (\etl_i-\mu))$ and $\tanh(\kappa \, (\etu_i - \mu))$, respectively, where a larger $\kappa > 0$ captures the $\sign(\mathord{\cdot})$ function behavior better. Moreover, instead of nonsmooth $\max$ and $\min$ operators in \eqref{non_smooth_funn_modif}, one can utilize their smooth \textit{over-} and \textit{under-approximations} as follows:
		\begin{subequations} \label{smooth_max_min_modif_fun}
			\begin{align}
				\rl_i(t) &\coloneqq \dfrac{1}{\nu} \ln \left[ e^{\big( \nu \, (\rsl_i(t) - \pl_i(t)) \big)} + e^ {\left( \nu \,\rhl_i(t) \right)} \right] \nonumber \\
				&>  \max_t \{\rsl_i(t) - \pl_i(t), \rhl_i(t)\}, \label{smooth_fun_modif_up} \\	
				\ru_i(t) &\coloneqq -\dfrac{1}{\nu} \ln \Big[ e^{\big( -\nu \, (\rsu_i(t) + \pu_i(t)) \big)} + e^{\left( -\nu \, \rhu_i(t) \right)} \Big] \nonumber \\
				&< \min_t \{\rsu_i(t) + \pu_i(t), \rhu_i(t)\}, \label{smooth_fun_modif_low}
			\end{align}
		\end{subequations}
		where a larger $\nu>0$ gives a closer approximation. Note that using a very small $\nu$ in \eqref{smooth_max_min_modif_fun} might lead to a very conservative inner-approximation of the original (feasible) CCF, thus jeopardizing its feasibility, while a very large $\nu$ might lead to instability in numerical calculations.
\end{remark}

\subsection{Funnel-Based Controller Design}
Now we design a low-complexity model-free robust funnel controller using a Prescribed Performance Control (PPC) method similar to \cite{bechlioulis2014low} to ensure that the output signals $x_i(t), i = \{1,\ldots, n\}$ will always remain within the online planned (in general asymmetric) CCFs in \eqref{const_consist_funnel}. The EL system \eqref{eq:sys_dynamics} can be re-written in state-space form as follows:
\begin{equation}\label{eq:sys_dyn_state_form}
	\!\!\!\!\!\!\!\!\!\begin{cases}
		\!\!\dot{x} = v, \\ 
		\!\!\dot{v} = M^{-1} (x)\left( - C(x,v) v - g(x) - D(x)v + u + d(t) \right). \!\!\!\!\!\!\!\!\!\!\!\!\!\!
	\end{cases}
\end{equation}
The controller design is two-fold: \textbf{(I)} velocity-level control design, and \textbf{(II)} acceleration-level control design.

\textbf{Step I-a.}\footnote{If Assumption~\ref{assum: relaxed} is used in place of Assumption~\ref{assum: ini_compat}, then, owing to the properties of \eqref{phi_ini}, this step can be omitted. See the Appendix for further details.} Given the initial condition $x(0)$ and hard constraints \eqref{hard_const} at $t = 0$, determine  boundary functions $\rsl_i(t), \rsu_i(t)$ of the soft constraints in \eqref{soft_const} (i.e., user-defined performance constraints) according to the control application (e.g., see Remark \ref{rem:app_examp}) such that conditions in Assumption \ref{assum: ini_compat} are met.

\textbf{Step I-b.} Define the normalized (w.r.t. the asymmetric funnel given by \eqref{const_consist_funnel}) system outputs as $\xh(t,x) = \col(\xh_i(t,x_i)) \in \R^n$, where:
\begin{equation} \label{eq:normal_x_hat}
	\xh_i(t,x_i) \coloneqq \frac{x_i - \tfrac{1}{2} (\ru_i(t) + \rl_i(t))}{\tfrac{1}{2} (\ru_i(t) - \rl_i(t))}, \; i = \{1,\ldots, n\}, 
\end{equation}
in which $\xh_i \in (-1,1)$ when $x_i \in (\rl_i(t), \ru_i(t))$. Moreover, define control related signals $\xi_x \coloneqq \diag(\xi_{x_i}(t,\xh_i)) \in \R^{n \times n}$ and $\epix = \col(\epixi(\xh_{i})) \in \R^n$, where:
\begin{subequations}
	\begin{align}
		\xi_{x_i}(t,\xh_i) &\coloneqq \frac{4}{(\ru_i(t) - \rl_i(t)) (1-\xh_i^2)}, \label{eq:normal_x}\\
		\epixi(\xh_i) &= T(\xh_i) \coloneqq \ln \left( \frac{1+\xh_i}{1 - \xh_i} \right). \label{eq:mapped_x}
	\end{align}
\end{subequations}
Finally, design the desired reference velocity vector as:
\begin{equation} \label{eq:vel_level_control}
	v_d(t,\xh) \coloneqq - k_x \, \xi_x \, \epix,
\end{equation}
with  $k_x > 0$, where $v_d(t,\xh) = \col(v_{d_i}(t,\xh_i)) \in \R^n$.

\textbf{Step II-a.} Define the velocity errors vector $e_v \coloneqq \col(e_{v_i}) = v - v_d \in \R^n$. Now the objective is to design the acceleration level controller $u$ in \eqref{eq:sys_dyn_state_form} to compensate the velocity errors by enforcing a (optionally symmetric) exponentially narrowing funnel on $e_{v_i}(t)$ indicated by:
\begin{equation} \label{vel_perfo_funnel}
	-\gamma^v_{i}(t)  < e_{v_i}(t) <  \gamma^v_{i}(t), \; \;\, \forall t\geq 0, \; \; i = \{1,\ldots, n\},
\end{equation}
where $\gamma^v_{i} (\mathord{\cdot}): \R \rightarrow \R_{>0}, i = \{1,\ldots, n\}$ are continuously differentiable positive performance functions for the velocity errors, which decay to a small neighborhood of zero. A possible choice for $\gamma^v_{i}(t)$ is $(\rho_{0_i}^v - \rho_{\infty_i}^v) \exp(-l_i^v t) + \rho_{\infty_i}^v$, where $l^v_i, \rho^v_{\infty_i}$ are user-defined positive constants and $\rho^v_{0_i} > |e_{v_i}(0)|$ that ensures $e_{v_i}(0) \in (-\gamma^v_{i}(0), \gamma^v_{i}(0)), i = \{1,\ldots, n\}$.

\textbf{Step II-b.} Similarly to the first step, define the normalized (w.r.t. the symmetric funnel given by \eqref{vel_perfo_funnel}) velocity errors as $\evh(t,e_{v}) = \col(\evih(t,e_{v_i})) \in \R^n$, where:
\begin{equation} \label{eq:normal_e_v}
	\evih(t,e_{v_i}) \coloneqq \frac{e_{v_i}}{\gamma^v_{i}(t)}, \quad i = \{1,\ldots, n\}, 
\end{equation}
in which $\evih \in (-1,1)$ when $e_{v_i} \in (-\gamma^v_{i}(t), \gamma^v_i(t))$. In addition, define control related signals $\xi_{v} \coloneqq \diag(\xi_{v_i}(t,\evih)) \in \R^{n \times n}$ and $\epiv = \col(\epivi(\evih)) \in \R^n$, where:
\begin{subequations}
	\begin{align}
		\xi_{v_i}(t,\evih) &\coloneqq \frac{2}{\gamma^v_i(t) \, (1-\evih^2)}, \label{eq:normal_ev}\\
		\epivi(\evih) &= T(\evih) \coloneqq \ln \left( \frac{1+\evih}{1 - \evih} \right). \label{eq:mapped_ev}
	\end{align}
\end{subequations}
Finally, design the control input $u$ for \eqref{eq:sys_dyn_state_form} as:
\begin{equation} \label{acce_level_control}
	u(t,\evh) \coloneqq - k_v \, \xi_v \, \epiv,
\end{equation}
with  $k_v > 0$, where $u(t,\evh) = \col(u_i(t,\hat{e}_{v_i})) \in \R^n$.
\begin{remark}
	The prescribed performance control technique guarantees prescribed transient and steady-state performance specifications that are encapsulated by a time-varying funnel condition as in \eqref{const_consist_funnel} (or \eqref{vel_perfo_funnel}). The basic idea is to enforce the normalized state in \eqref{eq:normal_x_hat} (resp. normalized error in \eqref{eq:normal_e_v}) to remain strictly within the set $(-1,1)$. This is achieved through exploiting a smooth, strictly increasing, bijective nonlinear mapping function $T(\mathord{\cdot}): (-1,1) \rightarrow (-\infty, +\infty)$, as in \eqref{eq:vel_level_control} (resp. \eqref{eq:mapped_ev}), which transforms the normalized state (resp. error) from constrained space $(-1,1)$ to an unconstrained one over $\R$. In particular when $x_i(0) \in (\rl_i(0), \ru_i(0))$ (resp. $e_{v_i}(0) \in (-\gamma^v_i(0), \gamma^v_i(0))$), the mapped signal $\epixi(\xh_i)$ (resp.  $\epivi(\evih)$) is initially well-defined. In this case, it is not difficult to verify that simply maintaining boundedness of the mapped signal $\epixi(\xh_i)$ (resp.  $\epivi(\evih)$) for all $t\geq0$ is equivalent to ensuring $\xh_i(t) \in (-1,1)$ (resp. $\evih(t) \in (-1,1)$).
\end{remark} 

\subsection{Stability Analysis}

The following theorem summarizes the main result of this paper, where it is proven that the proposed feedback control law \eqref{acce_level_control} is capable of maintaining $x_i(t),  i = \{1,\ldots, n\}$ within the online planned CCFs in Section \ref{sec:funnel_plan}, thus solving the robust hard and soft output constrained funnel control problem for uncertain EL systems.
\begin{theorem} \label{th:main_theorem}
	Consider the Euler-Lagrange system \eqref{eq:sys_dynamics} with hard and soft output constraints \eqref{hard_const} and \eqref{soft_const} under Assumptions \ref{assump:unknown}, \ref{assum_soft_hard_bounds} and \ref{assum: ini_compat}. Given $\rl_i(t), \ru_i(t), i = \{1,\ldots, n\}$ obtained from the constraint consistent online funnel planning scheme in \eqref{non_smooth_funn_modif} and \eqref{eq:modif_signals_dyn}, the feedback control law \eqref{acce_level_control} guarantees satisfaction of $\rl_i(t)< x_i(t) < \ru_i(t), \forall t\geq 0, i = \{1,\ldots, n\}$, as well as boundedness of all closed-loop signals.\footnote{The results of this theorem remain valid if Assumption~\ref{assum: ini_compat} is replaced by Assumption~\ref{assum: relaxed}. Moreover, in this case the proof remains unchanged. Refer to the Appendix for further details.}
\end{theorem}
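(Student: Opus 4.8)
The plan is a two-level prescribed-performance argument that exploits the cascade structure of the controller \eqref{eq:vel_level_control}--\eqref{acce_level_control} together with the funnel properties from Lemma~\ref{lem:one}. \emph{Step 1 (well-posedness in transformed coordinates).} First I would collect the closed loop into the state $\zeta\coloneqq\col(\xh,\evh)\in\R^{2n}$, with dynamics $\dot\zeta=h(t,\zeta)$ obtained from \eqref{eq:sys_dyn_state_form} and \eqref{eq:normal_x_hat}--\eqref{acce_level_control}. By Assumption~\ref{assum: ini_compat} and $\pl_i(0)=\pu_i(0)=0$ one has $x_i(0)\in(\rl_i(0),\ru_i(0))$, hence $\xh_i(0)\in(-1,1)$, while $\rho^v_{0_i}>|e_{v_i}(0)|$ gives $\evih(0)\in(-1,1)$; thus $\zeta(0)\in\Omega\coloneqq(-1,1)^{2n}$. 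On $\Omega$, Property~\ref{prop:conti} (continuity of $M^{-1},C,g,D$ and positive definiteness of $M$), boundedness of $d(t)$, and the strictly positive lower bounds $\ru_i-\rl_i\ge\epsilon^c_i$ (Lemma~\ref{lem:one}) and $\gamma^v_i\ge\rho^v_{\infty_i}$ make $h$ continuous and locally Lipschitz in $\zeta$ and piecewise continuous in $t$ (recall $\dpl_i,\dpu_i$, hence $\drl_i,\dru_i$, are only piecewise continuous). Carath\'eodory theory then gives a unique maximal solution on some $[0,\taum)$ with $\zeta(t)\in\Omega$ throughout, and the whole proof reduces to showing that $\xh_i,\evih$ stay in a fixed compact subset of $(-1,1)$, which forces $\taum=\infty$.

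\emph{Step 2 (position level).} For $t\in[0,\taum)$ one has $|e_{v_i}(t)|<\gamma^v_i(t)\le\bar\gamma^v_i$ for free, since $\evih\in(-1,1)$. From \eqref{eq:normal_x_hat}--\eqref{eq:mapped_x} I would compute $\dot\epixi=\xi_{x_i}\big(v_i-\tfrac12(\dru_i+\drl_i)\big)-\tfrac{2\xh_i}{1-\xh_i^2}\,\tfrac{\dru_i-\drl_i}{\ru_i-\rl_i}$, substitute $v=v_d+e_v$ with $v_d$ from \eqref{eq:vel_level_control}, and take $V_x=\tfrac12\epix^\top\epix$. Using $\tfrac{2\xh_i}{1-\xh_i^2}=\tfrac12(\ru_i-\rl_i)\xi_{x_i}\xh_i$, the bounds $\ru_i-\rl_i,\drl_i,\dru_i\in\Linf$ from Lemma~\ref{lem:one} (whence $\xi_{x_i}\ge\underline{\xi}_{x_i}>0$, because $\ru_i-\rl_i$ is bounded above), and $|e_{v_i}|<\bar\gamma^v_i$, a routine majorization yields $\dot V_x\le\|\xi_x\epix\|\big(-k_x\|\xi_x\epix\|+\bar D_x\big)$ for a constant $\bar D_x$; hence $\epix\in\Linf$ on $[0,\taum)$, i.e.\ $\xh_i(t)\in[-\bar x_i,\bar x_i]\subset(-1,1)$ with $\bar x_i<1$ fixed. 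Consequently $\xi_{x_i}$ is bounded above, $v_d\in\Linf$, and since $\dot x=v=v_d+e_v\in\Linf$ and $\drl_i,\dru_i\in\Linf$, $\dot v_d$ is piecewise continuous and bounded on $[0,\taum)$.

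\emph{Step 3 (velocity level and conclusion).} As $x_i\in(\rl_i,\ru_i)$ with $\rl_i,\ru_i\in\Linf$ and $v=v_d+e_v\in\Linf$, the pair $(x,v)$ stays in a compact set on which, by Property~\ref{prop:conti}, $M,M^{-1},C,g,D$ are bounded and $M$ is uniformly positive definite with $\lambda_{\max}(M)\le\bar\lambda_M<\infty$; also $d\in\Linf$. Differentiating \eqref{eq:mapped_ev} along \eqref{eq:sys_dyn_state_form} with $u$ from \eqref{acce_level_control}, taking $V_v=\tfrac12\epiv^\top\epiv$, writing $w\coloneqq\xi_v\epiv$, and using $w^\top M^{-1}w\ge\|w\|^2/\bar\lambda_M$ together with $\tfrac{2\evih}{1-\evih^2}=\gamma^v_i\xi_{v_i}\evih$, I obtain $\dot V_v\le\|w\|\big(-\tfrac{k_v}{\bar\lambda_M}\|w\|+\bar D_v\big)$ for a constant $\bar D_v$; hence $\epiv\in\Linf$ on $[0,\taum)$ and $\evih(t)\in[-\bar e_{v_i},\bar e_{v_i}]\subset(-1,1)$. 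Therefore $\zeta(t)$ stays in a compact subset of $\Omega$, so maximality of $[0,\taum)$ rules out $\taum<\infty$ and $\taum=\infty$. Then $\xh_i(t)\in(-1,1)$ for all $t\ge0$ is exactly $\rl_i(t)<x_i(t)<\ru_i(t)$, $i=\{1,\ldots,n\}$, and with $\xh,\evh$ bounded away from $\pm1$, combined with Lemma~\ref{lem:one} ($\rl_i,\ru_i,\drl_i,\dru_i\in\Linf$) and boundedness of $\gamma^v_i,\dot\gamma^v_i$, every closed-loop signal ($\epix,\epiv,\xi_x,\xi_v,v_d,v,u$) is bounded.

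\emph{Main obstacle.} The delicate points are (i) ordering the cascade so there is no circularity: the position-level estimate must use only the \emph{a priori} bound $|e_{v_i}|<\bar\gamma^v_i$, which holds merely because $\evih\in\Omega$, and only afterwards may $v_d,\dot v_d$ be bounded for use at the velocity level; and (ii) the CCF produced by \eqref{non_smooth_funn_modif}--\eqref{eq:modif_signals_dyn} is only continuous, so $\drl_i,\dru_i$ (and hence the closed-loop vector field and $\dot v_d$) are merely piecewise continuous in $t$, which forces the analysis into the Carath\'eodory / absolutely-continuous solution framework rather than classical Picard--Lindel\"of, and requires that all the ``bounded'' terms in the estimates above hold uniformly in $t$ on $[0,\taum)$.
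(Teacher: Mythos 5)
Your proposal is correct and follows essentially the same route as the paper's proof: establish a maximal solution with $\hat{x},\hat{e}_v$ in $(-1,1)^{2n}$, run the cascaded Lyapunov arguments on $V=\tfrac{1}{2}\varepsilon_x^{\top}\varepsilon_x$ and $\tfrac{1}{2}\varepsilon_v^{\top}\varepsilon_v$ (using the a priori bound on $e_v$ at the position level, then bounding $v_d,\dot v_d$ for the velocity level), and conclude $\taum=\infty$ by the compact-subset contradiction. The only cosmetic differences are your invocation of Carath\'eodory solutions versus the paper's citation of Sontag's Theorem 54, and your use of $1/\lambda_{\max}(M)$ in place of the paper's $\lambda_{\min}(M^{-1})$, which are equivalent.
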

\begin{proof}
The proof comprises three phases. First, we show that $\xh_i(t,x_i)$, and  $\evih(t,e_{v_i}), i = \{1,\ldots, n\}$ remain within $(-1, 1)$ for a specific time interval $[0, \taum)$ (i.e., the existence and uniqueness of maximal solutions). Next, we prove that the proposed control scheme guarantees, for all $t \in [0, \taum)$: (i) the boundedness of all closed loop signals as well as (ii) that $\xh_i(t,x_i)$, and  $\evih(t,e_{v_i})$ remain strictly in a compact subset of $(-1, 1)$, which leads to $\taum=\infty$ (i.e., forward completeness), thus finalizing the proof.
	
Differentiating $\xh_i$ \eqref{eq:normal_x_hat} and $\evih$ \eqref{eq:normal_e_v} yields:
	\begin{subequations}
		\begin{flalign}
			&\dot{\xh}_i = \frac{2}{\ru_i - \rl_i} \left[ v_i -\tfrac{1}{2} \left( (\dru_i + \drl_i) + \xh_i ( \dru_i - \drl_i) \right) \right]\!,\!\!\!\!\!\! & \\
			&\dot{\hat{e}}_{v_i} = (\gamma^v_{i})^{-1} \left[ \dot{e}_{v_i} - \evih \dot{\gamma}^v_{i} \right], \quad i = \{1,\ldots, n\}.&
		\end{flalign}
	\end{subequations}
	Define $\rho_m \coloneqq \diag(\ru_i - \rl_i) \in \R^{n\times n}, \rho_a \coloneqq \col(\ru_i + \rl_i) \in \R^{n}, \gamma^v \coloneqq \diag(\gamma^v_i) \in \R^{n\times n}$, then the above equations can be written in stacked form as:
	\begin{subequations} \label{eq:stacked_normaliz_dyn}
	 	\begin{alignat}{2}
	 		&\dot{\xh} &&= 2 (\rho_m(t))^{-1} \left( v - \tfrac{1}{2} \left( \dot{\rho}_a(t)  + \dot{\rho}_m(t) \, \xh \right) \right),\label{eq:stacked_normaliz_dyn_xh} \\
	 		&\dot{\hat{e}}_{v} &&= (\gamma^v(t))^{-1} \left( \dot{e}_{v} - \dot{\gamma}^v(t) \, \evh  \right). \label{eq:stacked_normaliz_dyn_evh}
	 	\end{alignat}
	 \end{subequations}
 	 Now employing \eqref{eq:sys_dyn_state_form}, as well as the fact that $v  = v_d + \gamma^v \evh$, $\dot{e}_v = \dot{v} - \dot{v}_d$, and substituting \eqref{eq:vel_level_control} and \eqref{acce_level_control} into \eqref{eq:stacked_normaliz_dyn} gives:
 		\begin{subequations}
 		\begin{alignat}{3}
 			&\dot{\xh} &&\coloneqq  h_{\xh}(t,\xh, \evh) \nonumber \\
 			&  &&=  2 (\rho_m(t))^{-1} \left[ v_d(t,\xh) + \gamv(t) \evh - \tfrac{1}{2} \left( \dot{\rho}_a(t)  + \dot{\rho}_m(t) \, \xh \right) \right],  \nonumber \\
 			&\dot{\hat{e}}_{v} &&\coloneqq h_{\evh}(t,\xh,\evh) = (\gamma^v(t))^{-1} \Big[ M^{-1}(x) \big(-C(x,v) v -g(x) \nonumber \\
 			& && - D(x)v   + u(t,\evh) + d(t) \big) -\dot{v}_d(t,\xh) -  \dot{\gamma}^v(t) \, \evh \Big]. \nonumber
 		\end{alignat}
 	\end{subequations}
	Note that in the above equations, based on \eqref{eq:normal_x_hat}, $x$ can be rewritten as $x(t,\xh)$. Moreover, from $v = v_d + \gamma^v \evh$, and \eqref{eq:vel_level_control} we have $v = v(t,\xh,\evh)$. 
	Thus, the closed-loop dynamical system of $[\xh , \hat{e}_{v} ]^{\top}$can be written in compact form as:
	\begin{equation} \label{normalized_dyn}
		\begin{bmatrix}
			\dot{\xh} \\
			\dot{\hat{e}}_{v}
		\end{bmatrix}
		= \begin{bmatrix}
			h_{\xh}(t,\xh, \evh) \\
			h_{\evh}(t,\xh,\evh)
		\end{bmatrix} \eqqcolon h(t,\xh,\evh).
	\end{equation}
Let us also define an open set $\Omega_h = \Omega_{\xh} \times \Omega_{\evh} \subset \R^{2n}$ with $\Omega_{\xh} = \Omega_{\evh} = \underset{n-\mathrm{times}}{\underbrace{(-1,1) \times \ldots \times (-1,1)}}$.

\textbf{\textit{Phase I.}} It is clear that the set $\Omega_h$ is nonempty and open. Followed by Assumption \ref{assum: ini_compat} and usage of \eqref{non_smooth_funn_modif},\footnote{Alternatively, followed by Assumption \ref{assum: relaxed} and usage of \eqref{non_smooth_funn_modif} and \eqref{phi_ini}.} $x_i(0), i = \{1,\ldots, n\}$ satisfy \eqref{const_consist_funnel} at $t=0$, which based on \eqref{eq:normal_x_hat} ensures $\xh(0) \in \Omega_{\xh}$. Moreover, followed by \eqref{eq:normal_e_v}, selecting $\rho^v_{0_i} > |e_{v_i}(0)|$ in $\gamma^v_i(t), i = \{1,\ldots, n\}$ guarantees $e_{v_i}(0)$ to satisfy \eqref{vel_perfo_funnel} at $t=0$, which leads to $\evh(0) \in \Omega_{\evh}$ based on \eqref{eq:normal_e_v}. Additionally, $h(t,\xh,\evh)$ is piecewise continuous on $t$ and locally Lipschitz on $[\xh , \hat{e}_{v} ]^{\top}$over the set $\Omega_h$, Therefore, the hypotheses of Theorem 54 in \cite[p.~476]{sontag1998mathematical} hold and the existence and uniqueness of a maximal solution $[\xh(t) , \hat{e}_{v}(t) ]^{\top}$of \eqref{normalized_dyn} for a time interval $[0, \taum)$ such that $[\xh(t) , \hat{e}_{v}(t) ]^{\top} \in \Omega_h, \forall t \in [0, \taum)$ is guaranteed. Accordingly, $\xh(t) \in \Omega_{\xh}$ and $\evh(t) \in \Omega_{\evh}, \forall t \in [0, \taum)$. As a result, we can further infer that  $x_i(t)$ and  $e_{v_i}(t), i = \{1,\ldots, n\}$ are bounded for all $t \in [0, \taum)$ as in \eqref{const_consist_funnel} and \eqref{vel_perfo_funnel}, respectively. 

\textbf{\textit{Phase II.}} Owing to $\xh(t) \in \Omega_{\xh}$ and $\evh(t) \in \Omega_{\evh}, \forall t \in [0, \taum)$, which reveals $\xh_i(t) \in (-1,1)$ and $\evih(t) \in (-1,1), \forall t \in [0, \taum)$, $\xi_{x_i}, \xi_{v_i}, i = \{1,\ldots, n\}$ in \eqref{eq:normal_x} and \eqref{eq:normal_ev} are lower bounded by a positive constant $\forall t \in [0, \taum)$. Moreover, the signals $\epixi$ and $\epivi$, defined in \eqref{eq:mapped_x} and \eqref{eq:mapped_ev}, are well-defined for all $t \in [0, \taum)$.

\textbf{\textit{Step$\,$1:}} Consider the following positive definite and radially unbounded Lyapunov function candidate: $V_1 = \tfrac{1}{2} \epix^{\top} \epix$. Differentiating $V_1$ with respect to time, substituting \eqref{eq:stacked_normaliz_dyn_xh}, \eqref{eq:sys_dyn_state_form}, \eqref{eq:vel_level_control}, and exploiting \eqref{eq:normal_x} and diagonality of $\xi_x$ gives:
\begin{align}
	\dot{V}_1 &= \epix^{\top} \xi_x \left[ -k_x \xi_x \epix + \gamv \evh - \tfrac{1}{2}(\dot{\rho}_a + \dot{\rho}_m \xh) \right] \nonumber \\
	&\leq -k_x \|\xi_x\|^2 \|\epix\|^2 + \|\xi_x\| \|\epix\| \|\Psi\|, \label{eq:lyap1_first}
\end{align}
where $\Psi = \gamv \evh - \tfrac{1}{2}(\dot{\rho}_a + \dot{\rho}_m \xh) \in \R^n$. From Phase I we have $\evh, \xh \in \Linf, \forall t \in [0, \taum)$. Moreover, Lemma \ref{lem:one} indicates $\dot{\rho}_a, \dot{\rho}_m \in \Linf, \forall t\geq0$, and $\gamv \in \Linf$ by construction. Hence, $\Psi \in \Linf, \forall t \in [0, \taum)$. Let $0<\theta_x<k_x$ and $\sigma_x \coloneqq k_x-\theta_x > 0$ be constants; thus adding and subtracting $\theta_x \|\xi_x\|^2 \|\epix\|^2$ to the right-hand side of \eqref{eq:lyap1_first} yields:
\begin{align}
	\dot{V}_1 &\leq - \sigma_x \|\xi_x\|^2 \|\epix\|^2 - \|\xi_x\| \|\epix\| \left( \theta_x \|\xi_x\| \|\epix\| - \|\Psi\| \right) \nonumber \\
	&= - \sigma_x \|\xi_x\|^2 \|\epix\|^2, \;\; \forall \|\epix\| \geq \frac{\|\Psi\|}{\theta_x \|\xi_x\|}, \;\; \forall t \in [0, \taum), \nonumber
\end{align}
which, indicates that $\epix$ is Uniformly Ultimately Bounded (UUB) \cite[Theorem 4.18]{khalil2002noninear}, thus there exists an ultimate bound $\epixbar \in \R_{>0}$ independent of $\taum$ such that $\|\epix\| \leq \epixbar$, $\forall t \in [0, \taum)$. Moreover, by taking the inverse logarithmic function of \eqref{eq:mapped_x} and exploiting $\epixbar$, we obtain:
\begin{equation} \label{eq:inv_T_xh}
	-1 < \tfrac{e^{-\epixbar} -1}{e^{-\epixbar} + 1} \eqqcolon \underline{b}_{\xh_i} \leq \xh_i(t) \leq \bar{b}_{\xh_i} \coloneqq \tfrac{e^{\epixbar} -1}{e^{\epixbar} + 1} < 1,
\end{equation}
for all $t \in [0, \taum), i = \{1,\ldots, n\}$. Thus, based on \eqref{eq:normal_x}, $\xi_x \in \Linf, \forall t \in [0, \taum)$, hence, the designed desired reference velocity vector $v_d$ in \eqref{eq:vel_level_control}, remains bounded for all $t \in [0, \taum)$. Moreover, invoking $v = v_d + \gamma^v \evh$, we also conclude that $v(t) \in \Linf, \forall t \in [0, \taum)$. Finally, by taking the time derivative of $v_d(t,\xh)$, substituting \eqref{eq:stacked_normaliz_dyn_xh} and utilizing \eqref{eq:inv_T_xh} and Lemma \ref{lem:one}, it is straightforward to deduce $\dot{v}_d \in \Linf$ for all $t \in [0, \taum)$ as well.

\textbf{\textit{Step$\,$2:}} Consider the following positive definite and radially unbounded Lyapunov function candidate: $V_2 = \tfrac{1}{2} \epiv^{\top} \epiv$. Differentiating $V_2$ with respect to time, substituting \eqref{eq:stacked_normaliz_dyn_xh}, \eqref{eq:sys_dyn_state_form}, \eqref{acce_level_control}, and exploiting \eqref{eq:stacked_normaliz_dyn_evh} and diagonality of $\xi_v$ yields:
\begin{align}
	\dot{V}_2 &= \epiv^{\top} \xi_v \big[- M^{-1} k_v \xi_v \epiv - M^{-1} \big( (C+D)(v_d + \gamv \evh) \nonumber \\
	& \quad + g - d \big) - \dot{v}_d - \dot{\gamma}^v \evh \big] \nonumber \\
	&\leq - k_v \lambda  \|\xi_v\|^2 \|\epiv\|^2 + \|\xi_v\| \|\epiv\| \|\Gamma\|, \label{eq:lyap2_first}
\end{align}
where $\lambda$ is the minimum eigenvalue of the positive definite matrix $M^{-1}$ and $\Gamma \coloneqq M^{-1} \big( (C+D)(v_d + \gamv \evh) + g - d \big) - \dot{v}_d - \dot{\gamma}^v \evh$. Note that we have already showed $x,v \in \Linf, \forall t \in [0, \taum)$ in Phase I and Step 1 of Phase II, thus, Property \ref{prop:conti} for EL system \eqref{eq:sys_dynamics} implies boundlessness of the EL system's matrices and vectors; that is $\|C(x,v)\|, \|g(x)\|, \|D(x)\|, \|M^{-1}(x)\| \in \Linf$ for all $t \in [0, \taum)$. As a result, owing to $x,v \in \Linf, \forall t \in [0, \taum)$ and Property \ref{prop:conti} of \eqref{eq:sys_dynamics}, boundedness of $d(t), \dot{\gamma}^v, \forall t \geq 0$, as well as boundedness of $\dot{v}_d, \evh, \forall t \in [0, \taum)$ established in Phase I and Phase II-Step 1, we have $\Gamma \in \Linf, \forall t \in [0, \taum)$. Now let $0<\theta_v<k_v \lambda$ and $\sigma_v \coloneqq k_v \lambda -\theta_v > 0$ be constants; thus adding and subtracting $\theta_v \|\xi_v\|^2 \|\epiv\|^2$ in the right-hand side of \eqref{eq:lyap2_first} yields:
\begin{align}
	\dot{V}_2 &\leq - \sigma_v \|\xi_v\|^2 \|\epiv\|^2 - \|\xi_v\| \|\epiv\| \left( \theta_v \|\xi_v\| \|\epiv\| - \|\Gamma\| \right) \nonumber \\
	&= - \sigma_v \|\xi_v\|^2 \|\epiv\|^2, \;\; \forall \|\epiv\| \geq \frac{\|\Gamma\|}{\theta_v \|\xi_v\|}, \;\; \forall t \in [0, \taum), \nonumber
\end{align}
which concludes that $\epiv$ is UBB \cite{khalil2002noninear}, hence, there exists an ultimate bound $\epivbar \in \R_{>0}$ independent of $\taum$ such that $\|\epiv\| \leq \epivbar$ for $\forall t \in [0, \taum)$. Similarly to Step 1, taking the inverse of \eqref{eq:mapped_ev}, and using $\epivbar$ leads to:
\begin{equation} \label{eq:inv_T_evh}
	-1 < \tfrac{e^{-\epivbar} -1}{e^{-\epivbar} + 1} \eqqcolon \underline{b}_{\evih}  \leq \evih(t) \leq \bar{b}_{\evih} \coloneqq \tfrac{e^{\epivbar} -1}{e^{\epivbar} + 1} < 1,
\end{equation}
for all $t \in [0, \taum), i = \{1,\ldots, n\}$.  From \eqref{eq:normal_ev}, we can deduce that $\xi_{v} \in \Linf, \forall t \in [0, \taum)$, thus the designed control input $u$ in \eqref{acce_level_control} is bounded for all $t \in [0, \taum)$.

\textbf{\textit{Phase III.}} Now we shall establish that $\taum = \infty$. In this direction, notice by \eqref{eq:inv_T_xh} and \eqref{eq:inv_T_evh} that $[\xh(t), \evh(t)]^{\top} \in \Omega_h^{\prime} \coloneqq \Omega_{\xh}^{\prime} \times \Omega_{\evh}^{\prime}$, where $\Omega_{\xh}^{\prime} \coloneqq [\underline{b}_{\xh_1}, \bar{b}_{\xh_1}] \times \ldots \times [\underline{b}_{\xh_n}, \bar{b}_{\xh_n}]$ and  $\Omega_{\evh}^{\prime} \coloneqq [\underline{b}_{\hat{e}_{v_1}}, \bar{b}_{\hat{e}_{v_1}}] \times \ldots \times [\underline{b}_{\hat{e}_{v_n}}, \bar{b}_{\hat{e}_{v_n}}]$ are nonempty and compact subsets of $\Omega_{\xh}$ and $\Omega_{\evh}$, respectively. Hence, assuming a finite $\taum < \infty$, since $\Omega_{h}^{\prime} \subset \Omega_{h}$, Proposition C.3.6 in \cite[p.~481]{sontag1998mathematical} dictates the existence of a time instant $t^{\prime} \in [0,\taum)$ such that $[\xh(t^{\prime}), \evh(t^{\prime})]^{\top} \notin \Omega_{h}^{\prime}$, which is a contradiction. Therefore, $\taum = \infty$. As a result, all closed-loop control signals remain bounded $\forall t \geq 0$, and moreover $[\xh(t), \evh(t)]^{\top} \in \Omega_{h}^{\prime} \subset \Omega_{h}, \forall t \geq 0$. Finally, recall from \eqref{eq:inv_T_xh} that $\xh_i(t) \in [\underline{b}_{\xh_i}, \bar{b}_{\xh_i}] \subset (-1,1), \forall t \geq 0,  i = \{1,\ldots, n\}$, thus invoking \eqref{eq:normal_x_hat}, it can be deduced that  $\rl_i(t)< x_i(t) < \ru_i(t)$ for all $t \geq 0, i = \{1,\ldots, n\}$.
\end{proof}

\section{Simulation Results}
Consider a mobile robot operating on a 2-D plane with kinematics and dynamics expressed by (see Fig.\ref{fig:robot} left):
\begin{flalign} \label{eq:robot_kin_dyn}
	&\begin{cases}
		\dot{p}_c = S(\theta)\psi \\ 
		\bar{M} \dot{\psi} + \bar{D} \psi = \bar{u} + \bar{d}(t) 
	\end{cases}\!\!\!\!\!\!, ~
	\begin{aligned}
	 	S(\theta)\! =\!\! \left[ \!\!\!
	 	\begin{array}{ccc}
	 		\cos \theta &\!\! \sin \theta &\!\! 0 \\
	 		 0 &\!\! 0 &\!\! 1
	 	\end{array}\!\!\!\right]^{\!\!\top}\!\!\!,\!\!\!\!\!\!\!
	\end{aligned}&
\end{flalign}
where $p_c = [x_c,y_c,\theta]^{\top}$is the position and orientation of the body frame $\{C\}$ relative to reference frame $\{O\}$, $\psi = [v_T, \dot{\theta}]^{\top}$, $v_T, \dot{\theta}$ are the transnational speed along the direction of $\theta$ and the angular speed about the vertical axis passing through $C$, respectively. Moreover, $\bar{M} = \diag(m,I)$, $m, I$ are mass and moment of inertia of the robot about the vertical axis, respectively, $\bar{D} \in \R^{2 \times 2}$ is a constant damping matrix, and $\bar{d}(t)$ is the vector of bounded external disturbances. To avoid the nonholonomic constraints, one can transform \eqref{eq:robot_kin_dyn} w.r.t. the hand position $p = [x_c,y_c]^{\top} + L [\cos \theta, \sin \theta]^{\top}$(see Fig.\ref{fig:robot} left) and obtain an equivalent EL form as in \eqref{eq:sys_dynamics} that meets Property 1 (with $x = p$ being the output of \eqref{eq:sys_dynamics}). The relations between $M(x), C(x,v), D(x), u, d(t)$ in \eqref{eq:sys_dynamics} and $\bar{M}, \bar{D}, \bar{u}, \bar{d}(t)$ in \eqref{eq:robot_kin_dyn}, as well as the numerical values of the robot's dynamical parameters can be found in \cite{cai2014adaptive}. 

Now consider the scenario described in Remark \ref{rem:app_examp}, and let $x_d(t) = [-1.5 + 5.8\cos(0.24t + 1.5), 5.8\sin(0.24t + 1.5)]^{\top}$ be the trajectory of a moving object (reference trajectory). Let $\rhl_1 = -6.58$, $\rhu_1 = 6.58$ and $\rhl_2 = -4.63$, $\rhu_2 = 4.63$ represent the box shaped hard constraints \eqref{hard_const} on the robot's hand position $p = x = [x_1, x_2]^{\top}$. Moreover, let $x(0) = [-3.19, 1.70]^{\top}$, $\theta(0) = -0.33$, and $\psi(0) = [0.2,-0.1]^{\top}$. In addition, assume $\gamma_i(t), i = \left\lbrace 1,2\right\rbrace $ as the user-defined trajectory tracking performance function (given in Remark \ref{rem:app_examp}) with $l_1 = l_2 = 0.7$ , $\rho_{\infty_1} = \rho_{\infty_2} = 0.2$ and $\rho_{0_1}, \rho_{0_2}$ are selected such that $\rho_{0_i} > |x_i(0)- x_{d_i}(0)|, i = \left\lbrace 1,2\right\rbrace$. Under this assumption the soft constraints \eqref{soft_const} on $x_i(t)$ (accounting for the trajectory tracking performance) are given by $\rsl_i(t) = x_{d_i}(t) - \gamma_i(t)$ and $\rsu_i(t) = x_{d_i}(t) + \gamma_i(t), i = \left\lbrace 1,2\right\rbrace$. The external disturbances are considered as $\bar{d}(t) = [0.75 \sin(2t + \frac{\pi}{3}) + 1.5 \cos(3t + \frac{3\pi}{7}), 0.25 \cos(3t+\frac{\pi}{6}) + 0.75 \sin(5t - \frac{\pi}{3})]^{\top}$(note that the transformed disturbance vector $d(t)$ is bounded in \eqref{eq:sys_dynamics} for bounded $\bar{d}(t)$, see \cite{cai2014adaptive}). 

\begin{figure}[!tbp]
	\flushleft
	\begin{subfigure}[t]{0.25\linewidth}
		\flushleft
		\includegraphics[align=c,width=\linewidth]{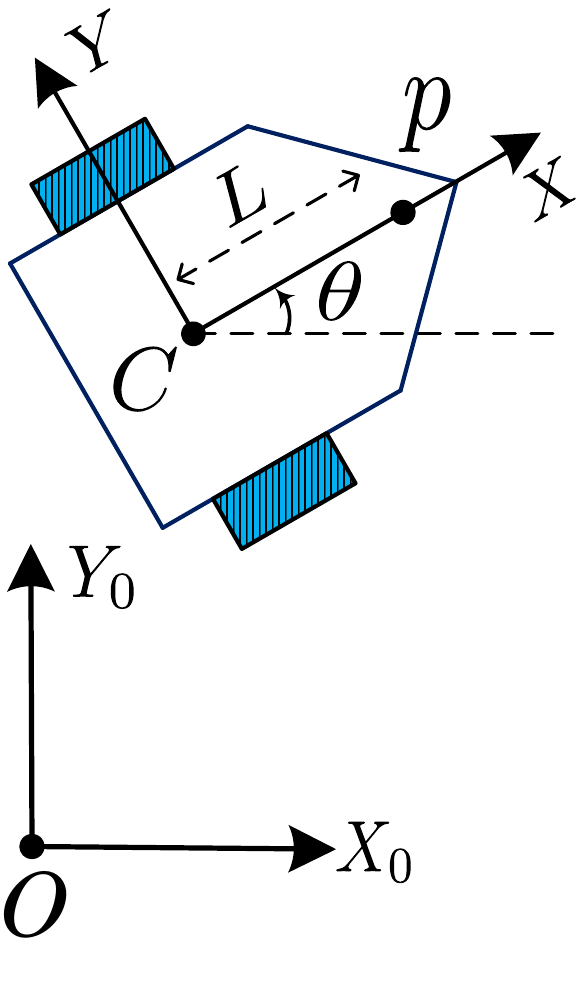}
	\end{subfigure}%
	~
	\begin{subfigure}[t]{0.75\linewidth}
		\centering
		\includegraphics[align=c,width=\linewidth]{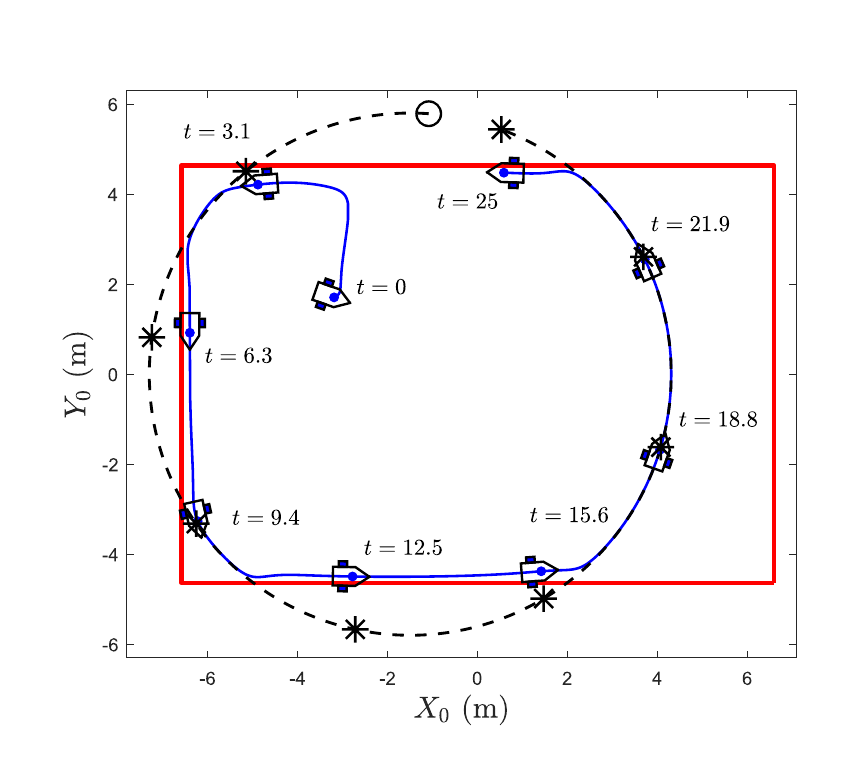}
		\label{fig:robot_traj}
	\end{subfigure}%
	\caption{Left: mobile robot. Right: mobile robot's trajectory (blue line) tracking a moving object (dashed line) under hard constraints (red line) with $k_c = 3$ \vspace{-0.3cm}}
	\label{fig:robot}
\end{figure}

\begin{figure}[!tbp]
	\centering
	\begin{subfigure}[t]{\linewidth}
		\centering
		\includegraphics[width=\linewidth]{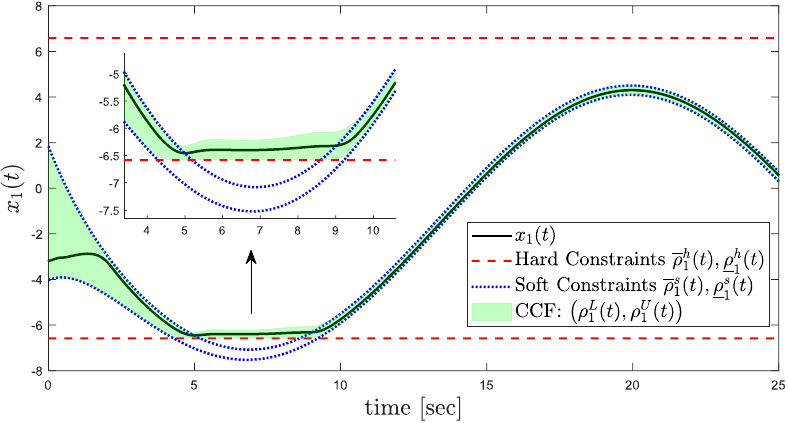}
	\end{subfigure}%
	\vspace{0.1cm}
	\begin{subfigure}[t]{\linewidth}
		\centering
		\includegraphics[width=\linewidth]{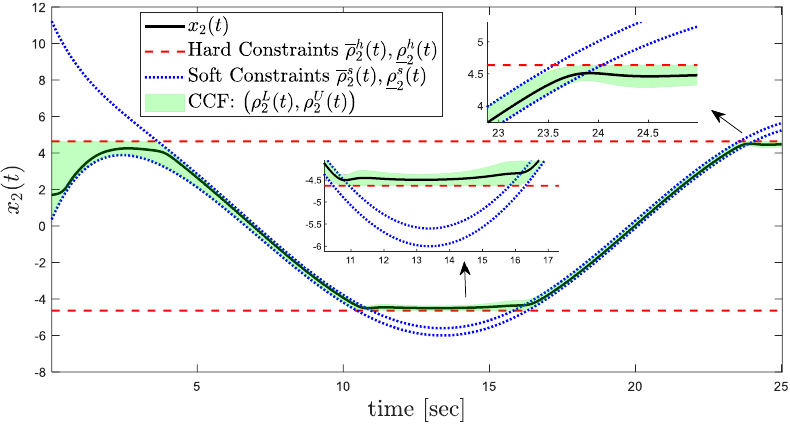}
	\end{subfigure}
	\caption{$x_1(t)$ and $x_2(t)$ evolution within CCFs under hard and soft constraints with $k_c = 3$.}
	\label{fig:x_1_x_2}
\end{figure}

\begin{figure}[!tbp]
	\centering
	\begin{subfigure}[t]{0.5\linewidth}
		\centering
		\includegraphics[width=\linewidth]{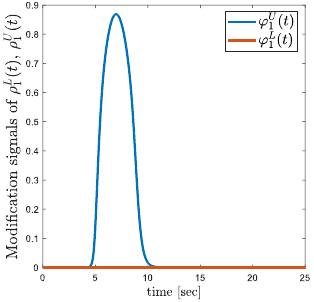}
	\end{subfigure}%
	~
	\begin{subfigure}[t]{0.50\linewidth}
		\centering
		\includegraphics[width=\linewidth]{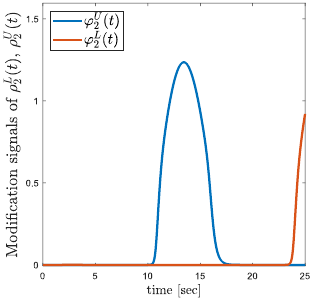}
	\end{subfigure}
	\caption{Evolution of the modification signals with $k_c = 3$.}
	\label{fig:phi_1_phi_2}
\end{figure}

The model-free control law \eqref{acce_level_control} designed for EL system \eqref{eq:sys_dynamics}, is applied to \eqref{eq:robot_kin_dyn} through the inverse transformation between $\bar{u}$ and  $u$ (see \cite{cai2014adaptive}). The parameters of $\gamma^v_{i}(t), i = \left\lbrace 1,2\right\rbrace$, in \eqref{vel_perfo_funnel}, employed in \eqref{acce_level_control} are considered as: $\rho_{\infty_1}^v = \rho_{\infty_2}^v = 0.1$, $l_1^v = l_2^v = 0.3$ and $\rho_{0_i}^v,  i = \left\lbrace 1,2\right\rbrace$ are selected such that $\rho^v_{0_i} > |e_{v_i}(0)|$. Moreover, $k_x = 0.2, k_v = 3$ are considered for \eqref{eq:vel_level_control} and \eqref{acce_level_control}, respectively. Finally, in the simulation we employed the smooth CCF planning proposed in Remark \ref{rem:smooth_ccFunel}, where $\mu = 0.01$, $k_c = 3$, $\kappa = 4$ and $\nu = 10$.

Under the proposed control scheme, Fig.\ref{fig:robot} on the right shows snapshots of the mobile robot's hand position trajectory (in solid blue) when tracking the moving object (depicted by $\ast$) whose trajectory and initial position are depicted by the dashed black line and $\Circle$, respectively. The red lines depict the hard (box-shaped) constraints on the robot's hand position, and it can be seen that the robot respects the hard constraints and tracks the object whenever it is possible. Fig.\ref{fig:x_1_x_2} depicts the evolution of the mobile robot's hand position in $X_0$ and $Y_0$ directions with time, i.e., $x_1(t)$ and $x_2(t)$, respectively. In Fig.\ref{fig:x_1_x_2}, the online planned constraint consistent funnels for each $x_1(t)$ and $x_2(t)$ are illustrated by the green regions, that satisfy the hard (safety) and  soft constraints (tracking performance) together. Finally, Fig.\ref{fig:phi_1_phi_2} shows the evolution of the nonnegative modification signals $\pl_i(t), \pu_i(t), i=\left\lbrace 1,2 \right\rbrace $ that contribute in generating the CCFs for $x_1(t)$ and $x_2(t)$.

Followed the discussion in Section \ref{sec:funnel_plan}, a smaller $k_c$ in \eqref{eq:modif_signals_dyn} leads to a slower  soft constraints recovery as well as more conservatism when the planned funnel violates the soft constraints. Figs. \ref{fig:robot_kc_small}, \ref{fig:x_1_x_2_kc_small} and \ref{fig:phi_1_phi_2_kc_small} show the simulation results with $k_c = 0.3$. As the figures suggest, in this case the mobile robot tends to violate the soft constraints more and have a slower rate for recovery of the soft constraints (notice that $\pu_i(t), \pl_i(t), i = \left\lbrace 1,2\right\rbrace$ show larger increases in this case).

\section{Conclusions}
In this paper, we proposed a funnel control scheme under hard and soft time-varying output constraints for uncertain Euler Lagrange nonlinear systems, where hard and soft constraints resemble safety and performance specifications on the output, respectively. Given a set of hard and soft constraints, we proposed an online funnel planning scheme to design a constraint consistent funnel (CCF) that violates the soft constraints whenever they become conflicting with the hard constraints. The prescribed performance control approach was used to design a robust low complexity control law to maintain the outputs within the planned CCF. Future work will be devoted to generalizing the proposed scheme by considering time and output dependent hard/soft constraints and improving the funnel planning scheme to have exact (instead of exponential) soft constraint recovery.

\begin{figure}[!tbp]
	\centering
	\includegraphics[width=\linewidth]{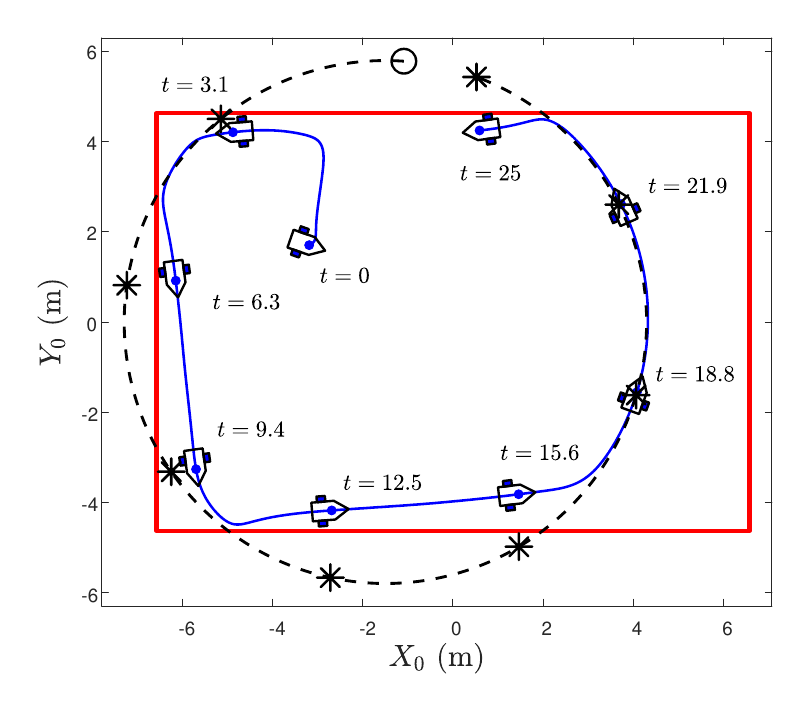}
	\caption{Mobile robot's trajectory (blue line) tracking a moving object (dashed line) under hard constraints (red line) with $k_c = 0.3$. \vspace{-0.3cm}}
	\label{fig:robot_kc_small}
\end{figure}

\begin{figure}[!tbp]
	\centering
	\begin{subfigure}[t]{\linewidth}
		\centering
		\includegraphics[width=\linewidth]{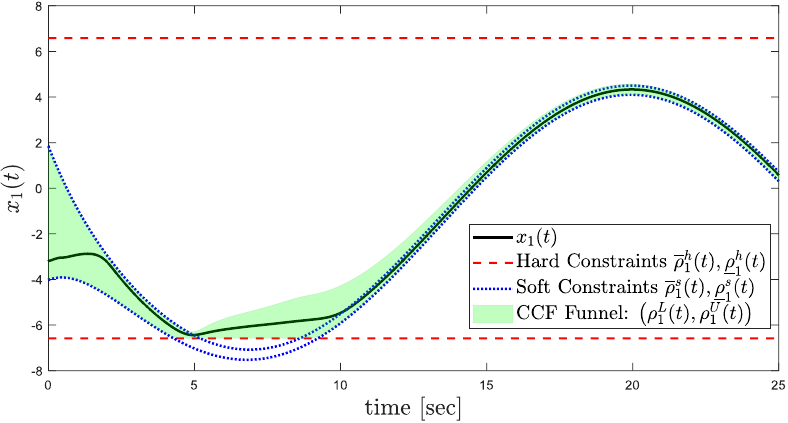}
	\end{subfigure}%
	\vspace{0.1cm}
	\begin{subfigure}[t]{\linewidth}
		\centering
		\includegraphics[width=\linewidth]{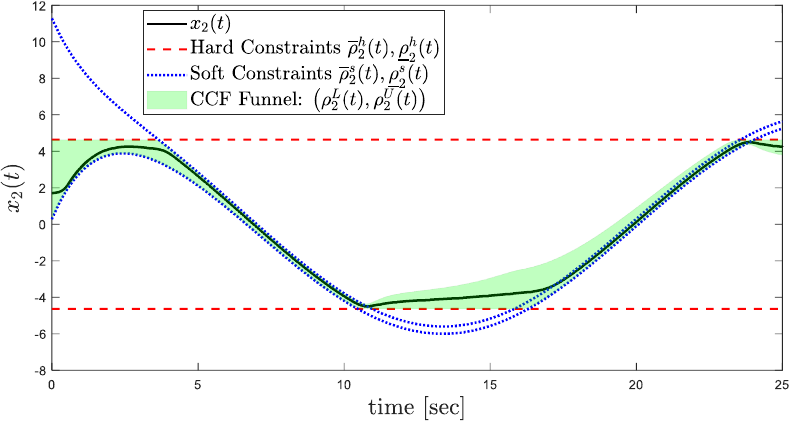}
	\end{subfigure}
	\caption{$x_1(t)$ and $x_2(t)$ evolution within CCFs under hard and soft constraints with $k_c = 0.3$ .}
	\label{fig:x_1_x_2_kc_small}
\end{figure}

\begin{figure}[!tbp]
	\centering
	\begin{subfigure}[t]{0.50\linewidth}
		\centering
		\includegraphics[width=\linewidth]{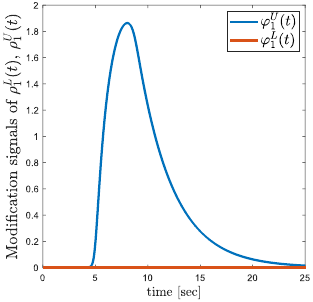}
	\end{subfigure}%
	~
	\begin{subfigure}[t]{0.50\linewidth}
		\centering
		\includegraphics[width=\linewidth]{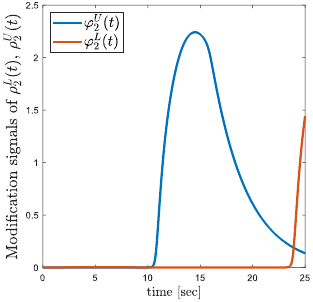}
	\end{subfigure}
	\caption{Evolution of the modification signals under $k_c = 0.3$. \vspace{-0.3cm}}
	\label{fig:phi_1_phi_2_kc_small}
\end{figure}


\appendix[Relaxing Assumption \ref{assum: ini_compat}]
\label{appen:relax_assum}

Here, we explain how an appropriate choice of $\pl_i(0)$ and $\pu_i(0)$ in \eqref{eq:modif_signals_dyn} allows us to relax Assumption~\ref{assum: ini_compat}. Recall that Assumption~\ref{assum: ini_compat} requires, for each $x_i(t)$, that the hard and soft funnel constraints are compatible at $t = 0$ and that $x_i(0)$ satisfies both. In contrast, the following relaxed assumption only requires satisfaction of the hard funnel constraint at $t = 0$, which is a natural and non-restrictive condition.

\begin{assumption} \label{assum: relaxed}
	The initial outputs of \eqref{eq:sys_dynamics}, $x_i(0), i \in \{1,\ldots, n\}$, satisfy the hard funnel constraints \eqref{hard_const} at $t =0$, i.e., $x_i(0) \in (\rhl_i(0), \rhu_i(0))$, $i \in \{1,\ldots, n\}$.
\end{assumption}

To ensure the effectiveness of the proposed funnel-planning approach under Assumption~\ref{assum: relaxed}, we choose $\pl_i(0)$ and $\pu_i(0)$ in \eqref{eq:modif_signals_dyn} as follows:
\begin{subequations} \label{phi_ini}
	\begin{flalign}
		&\pl_i(0) =
		\begin{cases}
			\! 0 &  x_i(0)  - \rsl_i(0) > 0  \\
			\! \rsl_i(0) - x_i(0) + \varsigma^L_i \!& \text{otherwise}
		\end{cases}\!\!\!\!\!\!&  \label{phi_low_ini} \\
		&\pu_i(0) =
		\begin{cases}
			\! 0 & \rsu_i(0) - x_i(0) > 0  \\
			\! x_i(0) - \rsu_i(0) + \varsigma^U_i \!& \text{otherwise} \\
		\end{cases}\!\!\!\!\!\!&  \label{phi_up_ini}
	\end{flalign}
\end{subequations}
where $\varsigma^L_i$ and $\varsigma^U_i$ are some user-defined sufficiently small positive constants. 

First, note that \eqref{phi_ini} always guarantees $\pl_i(0), \pu_i(0) \geq 0$. Moreover, since $x_i(0) \in (\rhl_i(0), \rhu_i(0))$ holds by assumption, if $x_i(0) - \rsl_i(0) > 0$ and $\rsu_i(0) - x_i(0) > 0$ hold together it implies that the soft-constrained funnel \eqref{soft_const} is compatible with the hard-constrained funnel \eqref{hard_const} at $t = 0$. In this particular case, \eqref{phi_ini} yields $\pl_i(0) = \pu_i(0) = 0$.

Now consider the case where either $x_i(0) - \rsl_i(0) > 0$ or $\rsu_i(0) - x_i(0) > 0$ does not hold (note that, owing to $\rsu_i(0) > \rsl_i(0)$, only one of these two conditions can be violated at $t = 0$). In this case, $x_i(0)$ does not lie within the soft-constrained funnel (i.e., $x_i(0) \notin (\rsl_i(0), \rsu_i(0))$). Moreover, notice that in this scenario it is not determined whether the soft-constrained funnel \eqref{soft_const} is compatible with the hard-constrained funnel \eqref{hard_const} at $t = 0$. 

For example, consider \eqref{phi_up_ini} and let $\rsu_i(0) - x_i(0) \leq 0$. Employing \eqref{phi_up_ini} in \eqref{eq:funn_up_bound} yields $\rU_i(0) = \min \{ x_i(0) + \varsigma^U_i, \rhu_i(0)\}$, which provides a valid upper bound for $x_i(t)$ at $t = 0$ (i.e., $x_i(0) < \rU_i(0)$). Note that even if $\varsigma^U_i$ is chosen to be large enough such that $x_i(0) + \varsigma^U_i > \rhu_i(0)$, $\rU_i(0)$ becomes equal to $\rhu_i(0)$, which is still a valid upper bound according to Assumption \ref{assum: relaxed}.

Notice that, when using \eqref{phi_up_ini}, if the soft-constrained funnel is compatible with the hard-constrained funnel at $t = 0$ under the given user-defined minimal distance $\mu$ (i.e., $\etu_i = \rsu_i(t) - \rhl_i(t) > \mu$), then, as discussed in Section \ref{sec:funnel_plan}, \eqref{eq:modif_signals_up} ensures that $\pu_i(t)$ decreases exponentially towards zero. Consequently, by \eqref{eq:funn_up_bound}, the initially compatible soft constraint boundary $\rsu_i(t)$ is recovered exponentially fast. Conversely, if the soft-constrained funnel is initially incompatible with the hard-constrained funnel under the given user-defined minimal distance (i.e., $\etu_i = \rsu_i(t) - \rhl_i(t) \leq \mu$), then \eqref{eq:modif_signals_up} guarantees the evolution of $\pu_i(t)$ such that the online planned CCF boundary $\rU_i(t)$ approaches the incompatible soft constraint bound $\rsu_i(t)$ as closely as possible (depending on the tuning parameter $k_c$), while still respecting the hard-constrained funnel. 

Similarly, one can justify the use of \eqref{phi_low_ini} in choosing $\pl_i(0)$ under Assumption~\ref{assum: relaxed}.

Finally, we emphasize that if $\pl_i(0)$ and $\pu_i(0)$ are selected according to \eqref{phi_ini}, then it is straightforward to verify that the results of Lemma~\ref{lem:one} remain valid when Assumption~\ref{assum: ini_compat} is replaced by Assumption~\ref{assum: relaxed}. Consequently, the results of Theorem~\ref{th:main_theorem} also hold under (relaxed) Assumption~\ref{assum: relaxed}.

\bibliographystyle{ieeetr}
\bibliography{Refs}

\end{document}